\definecolor{green}{rgb}{0.09, 0.45, 0.27}
\newcommand{\vertex}{\texttt{PURE}}
\newcommand{\vertexplus}{\texttt{PURE-Exp}}
 \newtheorem{observation}{Observation}
\newtheorem{theorem}{Theorem}
\newtheorem{lemma}{Lemma}[theorem]
\newtheorem{proposition}{Proposition}
\newtheorem{definition}{Definition}
\newtheorem{thm}{Theorem}%[section]
\newtheorem{lm}{Lemma}[thm]
\newtheoremstyle{nonindented}{1ex}{1ex}{}{}{\bfseries}{.}{.5em}{}
\newtheoremstyle{indented}{1ex}{1ex}{\itshape\addtolength{\leftskip}{0.6cm}\addtolength{\rightskip}{0.6cm}}{}{\bfseries}{.}{.5em}{}
\theoremstyle{nonindented}
\theoremstyle{indented}
\theoremstyle{plain}
\newenvironment{proofsketch}{\begin{proof}[Proof Sketch]}{\end{proof}}
\newcommand{\abs}[1]{\left| #1 \right|}
\renewcommand{\hat}{\widehat}
\renewcommand{\tilde}{\widetilde}
\renewcommand{\bar}{\overline}
\newcommand{\bvec}[1]{\boldsymbol{ #1 }}
\def\min{\qopname\relax n{min}}
\def\max{\qopname\relax n{max}}
\def\argmax{\qopname\relax n{argmax}}
\newcommand{\mini}[1]{\mbox{minimize} & {#1} &\\}
\newcommand{\qcon}[2]{&#1, & \mbox{for } #2.  \\}
\newenvironment{lp}{\begin{equation}  \begin{array}{lll}}{\end{array}\end{equation}}
\newenvironment{lp*}{\begin{equation*}  \begin{array}{lll}}{\end{array}\end{equation*}}
\title{Inverse Game Theory for Stackelberg Games: \\ the Blessing of Bounded Rationality %
}
\author{%
  Jibang Wu\\
  Department of Computer Science\\
  University of Chicago\\
  \texttt{wujibang@uchicago.edu} \\
   \And
   Weiran Shen \\
   Gaoling School of Artificial Intelligence\\
   Renmin University of China \\
   \texttt{shenweiran@ruc.edu.cn} \\
   \AND
   Fei Fang \\
   Institute for Software Research\\
   Carnegie Mellon University\\
   \texttt{feif@cmu.edu} \\
   \And
   Haifeng Xu \\
  Department of Computer Science\\
  University of Chicago\\
   \texttt{haifengxu@uchicago.edu} \\
}
\begin{document}

\maketitle

\begin{abstract}
{
Optimizing strategic decisions (a.k.a. computing equilibrium) is key to the success of many non-cooperative multi-agent applications. However, in many real-world situations, we  may face the exact opposite of this game-theoretic problem --- instead of prescribing equilibrium of a given game, we may directly observe the agents' equilibrium behaviors but want to infer the underlying parameters of an unknown game. This research question, also known as \emph{inverse game theory}, has been studied in multiple recent works in the context of Stackelberg games. Unfortunately, existing works exhibit quite negative results, showing statistical hardness \cite{letchford2009learning,peng2019learning} and computational hardness \cite{kalyanaraman2008complexity, kalyanaraman2009complexity,kuleshov2015inverse}, assuming follower's perfectly rational behaviors.  
Our work relaxes the perfect rationality agent assumption to the classic \emph{quantal response} model, a   more realistic behavior model of bounded rationality.
Interestingly, we show that the smooth property brought by such bounded rationality model actually leads to provably more efficient learning of the follower utility parameters in general Stackelberg games.  Systematic empirical experiments on synthesized games confirm our theoretical results and further suggest its robustness beyond the strict quantal response model.}
\end{abstract}

\section{Introduction}\label{sec:intro}
 One primary objective of game theory is to predict the behaviors of agents through equilibrium concepts in a given game. In practice, however, we may observe some equilibrium behaviors of agents, but the game itself turns out to be unknown. For example, an online shopping platform can observe the shoppers' purchase decisions on different sale prices, but the platform has limited knowledge of the exact utilities of the shoppers. Similarly, while the policymaker could observe the market reactions to its policy announcement, the exact motives behind traders' reactions are usually unclear. In various security domains, the defender may want  to understand the intentions or incentives of the attackers from their responses to different defense strategies so as to improve her future defense strategy. As such, recovering the underlying game parameters would not only lead us to better strategic decisions, but also improve our explications of the motives and rationale in the dark.
These potentials and prospects motivate a class of research problems known as the inverse game theory~\cite{kuleshov2015inverse}: \emph{given the agents' equilibrium behaviors, what are possible utilities that induce these behaviors?} In this paper, we specifically target the sequential game setting from the perspective of the first-moving agent (e.g., Internet platform, policymaker, or defender) whose different strategies (e.g., price, regulation, or defense scheme) would induce different equilibrium behaviors of the following agent (e.g., Internet users, traders, or attacker). 
Studies of such game settings have seen broad impacts and extensive applications ranging from the principal-agent problems in contract design \citep{holmstrom1979moral, grossman1992analysis}, the AI Economist \citep{zheng2020ai} to security games modeled for social good \citep{fang2015security}.  
We formalize our problem under the normal form Stackelberg game, where a leader has the commitment power of a randomized strategy, and a follower accordingly decides his response. It is known that the optimal commitment of the leader can be efficiently computed in a single linear program, given full knowledge of the game \cite{conitzer2016stackelberg}. However, the inverse learning problem to determine the underlying game from the follower's responses is more challenging: \citet{letchford2009learning,peng2019learning} show that learning optimal leader strategy from the follower's best responses requires number of samples that is a high-degree polynomial in the game size and may be exponential in the worst cases. This significantly limits the practicality of these algorithms, as the leader usually cannot afford the time or cost to gather feedback from so many   interactions.

More concerning is the inconvenient reality that we can hardly expect the agents' optimal equilibrium responses assumed in existing work. In fact, these shoppers, traders, or attackers themselves hardly know their exact utilities and are naturally unable to determine the expected-utility maximizing strategy. Extensive studies of behavioral economics and psychology \citep{kahneman1979econ, aumann1997rationality, mckelvey1995quantal, camerer2004cognitive, camerer2011behavioral} have pinpointed the cognitive limitations that make human decisions prone to the noisy perception of their utilities. Among various models for quantifying irrational agent behaviors, one of the most popular ones is perhaps the \emph{quantal response} (QR) model~\cite{mckelvey1995quantal}, which adopts the well-known logit choice model to capture agents' probabilistic selection of actions. This will also be the bounded rationality model of our focus in this paper. 

\textbf{The Blessing of Bounded Rationality. } The key insight revealed from this paper is that the extra layer of behavioral complexity due to bounded rationality, while complicating the modeling and computation, provides a more informative source for us to learn the underlying utility of agents. To understand the intuitions and motivations behind our results, consider a case where the follower has a dominated action $j_1$ as shown in Table \ref{tab:payoff-example}, where the leader's and follower's utility of action profile $(i,j)$ is specified by $u_{i,j}, v_{i,j}$ respectively. Conventionally, such an instance is treated as a degenerated instance, because the leader could ignore the action $j_1$ that a perfectly rational follower would never play. Then, the optimal leader strategy is clearly to always play the action $i_2$.
\begin{table}[h]
    \centering
    \begin{tabular}{|c|c|c|} \hline
         $u_{i,j}, v_{i,j}$ & $j_1$ & $j_2$  \\ \hline
         $i_1$ &  $100, 0.9$  & $0.9, 1$ \\ \hline
         $i_2$ &  $-99, 0.9$  & $1.1, 1$\\ \hline  
    \end{tabular}
    \caption{An example of dangerously ``degenerated'' Stackelberg game. } 
    \label{tab:payoff-example} 
    \vspace{-2mm}
\end{table}
 However, when facing a boundedly rational follower, it becomes possible to observe the response $j_1$ and estimate the utilities regarding this dominated action. For example, if the follower plays his action $j_1$ and $j_2$ at almost the same frequency, the follower's expected utility on the two actions should be close. Although such dominated action has no effect on the leader's optimal strategy against a perfectly rational follower, it could be a potentially damaging (or beneficial) action that leader want to avoid (or encourage) a bounded rational follower to play. That is, in the above game instance, if a somewhat irrational follower plays action $j_1$, it would be dangerous for the leader to play action $i_2$ yet rewarding to play action $i_1$; therefore, a more robust leader strategy should randomize by assigning some probability to play action $i_1$. {We remark that in general, even without such extreme case of dominated actions, the extra payoff information is now available on how much worse (or better) it is to use the empirical frequency of the boundedly rational action responses (as long as some smoothness properties are exhibited), which are overlooked under the assumption of perfectly rational followers.}

\textbf{Our Results. }  We present a set of tight analysis on the number of strategies and sample complexity sufficient and necessary to learn the follower's utility, for both situations in which the leader can observe the follower's full mixed strategies or only the follower's sampled pure strategies. In the former situation of observing follower's mixed strategies, our algorithm can recover the follower utility parameters using $m$ follower mixed strategy responses in any general Stackelberg game where $m$ is the number of leader actions.  Surprisingly, the required number of queries is independent of follower actions! This is due to the fact that the randomness introduced by bounded rationality carries much more information about follower payoffs, compared to the perfect best response. In the later (more realistic) situation of only observing follower's sampled pure strategy, our algorithm learns  the follower utility parameters within precision $\epsilon$ with probability at least $\delta$ using $\Theta(\frac{ m\log (mn/\delta) }{\rho \epsilon^2})$  carefully chosen queries, where $n$ is the number of follower actions and $\rho$ depends on agent's bounded rationality level and  is  of order $\Theta(1/n)$ for typical boundedly rational agents. 
Interestingly, the additional challenge of only observing sampled actions only deteriorates the sample complexity by a factor of $\log (mn)/\rho$.\footnote{Note that the $\frac{\log (\delta)}{\epsilon^2}$ term comes from concentration bound and is natural when observations (i.e., observed follower actions) have randomness.} These sample completexity results should be compared with that of \cite{peng2019learning,letchford2009learning}, which study similar learning questions but from perfectly rational follower responses. The $m\log(mn)/\rho$ order in our sample complexity is in sharp contrast to their complexity with \emph{exponential} dependence in $m$ or $n$ in the worst case. Our experimental results empirically confirm the tightness of our sample complexity analysis.

At the conceptual level, our work illustrates that noises due to bounded rational behaviors could be leveraged as additional information sources to learn the follower utility. This intuition also drives the design of our analytical tools to explain how efficient and effective learning of the follower's utility is possible in real situations, in complementing the previous negative results developed under the idealized perfect rational behavior models~\citep{letchford2009learning, peng2019learning}. %

\section{Related Work}

\textbf{Learning in Stackelberg Game. }
The learning problem in sequential games has been studied in several different setups.  \citet{marecki2012playing, balcan2015commitment} consider the online learning problem in the Stackelberg security game with adversarially chosen follower types. \citet{bai2021sample} consider a bandit learning setting where one could query any entry of the followers' utility under noise and use the estimation of utility to approximate the optimal leader strategy; however, this learning process assumes \emph{centralization}, that is, the learner can control both leader's and follower's actions.
More similar to ours is the strategic learning setup in Stackelberg games studied by \cite{letchford2009learning, peng2019learning,blum2014learning}, where the leader adaptively chooses her strategies based on the observation of the follower's best response and eventually recovers the follower's utility up to some precision level.

\textbf{Bounded Rationality. }
 \citet{mckelvey1995quantal} introduced the quantal response equilibrium (QRE) by adopting the logit choice model \citep{debreu1960individual, mcfadden1976quantal}. QRE serves as a strict generalization of Nash equilibrium (NE) --- when the agents become perfectly rational, QRE converges to the NE. 
 The modeling success of QR model attributes to the nice mathematical and statistical properties of the logit function that can capture a variety of boundedly rational behaviors under different parameter $\lambda$.  QRE is widely adopted especially in Stackelberg (security) games~\citep{yang2012computing, nguyen2013analyzing,sinha2015learning, fang2015security, haghtalab2016three, vcerny2021computing} and zero-sum games \citep{ling2018game} and notably has been deployed in various real world application~\citep{an2013deployed, fang2017paws}.  Moreover, the model structure of QR has been also used in various other contexts, such as the softmax activation in neural network~\citep{goodfellow2016deep}, multinomial logistic regression~\citep{bishop2006pattern} and the multiplicative weight update algorithm for no-regret learning~\citep{arora2012multiplicative}.

As an initial attempt to our general learning problem, we also adopt the QR model to capture our agent's bounded rational behavior, for its modeling success in practice and being the most common choice of prior work~\citep{yang2012computing, nguyen2013analyzing,fang2015security, haghtalab2016three, vcerny2021computing, ling2018game,an2013deployed, fang2017paws}. 
We acknowledge that there exist other models of bounded rational behaviors beyond the QR model. For example, \citet{kahneman1979econ} introduced the prospect theory to model the bounded rationality of agents in games under risk; \citet{camerer2004cognitive} proposed the cognitive hierarchy theory that classifies the agents according to their degree of reasoning in forming expectations of others.  %
We anticipate that the message of our paper --- i.e., the observation of suboptimal responses could provide additional information to learn the follower's preferences --- would apply to many of these bounded rationality models.

\textbf{Inverse Game Theory. }
\citet{vorobeychik2007learning} considered the payoff function learning problem using the strategy profiles and the corresponding utilities through regression.  
\citet{kuleshov2015inverse} introduced the concept of inverse game theory, and the authors showed that the problem of computing the agents' utilities from a set of correlated equilibrium is NP-Hard, unless the game is known to have special structures. 
More recently, the inverse game theory problem is studied under the QR model and leads to a few positive results: {\citet{sinha2015learning} considers the offline PAC-learning setup where the follower responses can be predicted with small error for a fixed leader strategy distribution;} \citet{haghtalab2016three} proves only three strategies are sufficient to recover linear follower payoff functions in security games; \citet{ling2018game} presents an end-to-end learning framework that learns the zero-sum game payoff from its QRE. Following their success, our paper is the first work that provides theoretical guarantee of payoff recovery in \emph{general} Stackelberg game. Finally, inverse problems have received significantly more attention in single-agent decision making problems; The most notable problem is the inverse reinforcement learning pioneered by \citet{ng2000algorithms, abbeel2004apprenticeship}.

\section{Problem Formulation}
\paragraph{Game Setup} We consider the Stackelberg game between a single leader (she) and  follower (he).  We let $U\in \mathbb{R}^{m \times n}$ (resp. $V \in \RR^{m \times n}$) be the leader (resp. follower's) utility matrix, where $m,n$ are the number of actions for the leader (resp. follower). 
We use $\cG(U, V)$ to denote the game instance. 
Each entry $u_{i,j}$ (resp. $v_{i,j}$) of the utility matrix denotes the leader's utility (resp. follower's utility) when leader plays action $i$ and follower plays action $j$. Without loss of generality, let $u_{i,j}, v_{i,j} \in [0,1]$.
Let $V_j \in \RR^m$ be the $j$th column of the matrix $V$. 
We denote the set of the leader's (resp. follower's) action set by $[m] := \{1,\dots, m\}$ (resp. $[n] := \{1,\dots, n\}$).

In this sequential game, the leader moves first by committing to a (possibly randomized) strategy, $\bx = (x_1, \cdots, x_m) \in \Delta_{m}$, where the simplex $\Delta_{m} = \{\bx: \sum_{i\in[m]} x_i = 1 \text{ and } 0\leq x_i \leq 1\}$ and each $x_i$ represents the probability the leader playing action $i$. 
Similarly, let $\Delta_n$ denote the follower's strategy space. Under perfect rationality, given the leader's committed strategy, the follower would in turns chooses the best response action $j^*$ that maximizes his utility, i.e., $j^* = \argmax_{j\in [n]} \{ \bx^\top V_j \}. $ 
In our problem, we use the QR model instead to capture the follower's bounded rational behavior. That is, the follower would respond to the leader's committed strategy by choosing an strategy $\by^*$ that maximizes his utility up to a Gibbs entropic regularizer, i.e.,  
$\by^* = \argmax_{\by \in \Delta_{n}} \{ \lambda \bx^\top V \by -  \by\ln \by \}.$ This is shown to be equivalent to the setting where the follower is best responding according to the payoff perturbed by noises from a Gumbel distribution~\citep{jang2016categorical}. And we know the close form solution of follower's optimal  strategy for this convex optimization program is exactly the logit choice model on the true payoff, i.e., for each $j\in[n]$, $y^*_j = \frac{\exp(\lambda \bx^\top V_j)}{\sum_{k\in [n]}\exp(\lambda \bx^\top V_k)}$ \citep{mertikopoulos2016learning}.

We refer to $\lambda$ as the bounded rationality constant that is given in each specific problem, as several existing work have already determined its empirical value in practice: the human behavior experiments in \cite{pita2010robust, yang2011improving} compute $\lambda = 7.6$; the experiments \cite{lieberman1960human, o1987nonmetric, mckelvey1995quantal}  show $\lambda$ is in the range of $4$ to $16$.\footnote{The $\lambda$ estimations are normalized to the utility scale in $[0,1]$.}

\paragraph{Learning Problem}
We consider the inverse game theory problem in sequential game with unknown follower utility and seek to quantify how much the leader can learn about a bounded rational follower's utility. We frame this problem under an \emph{active} and \emph{strategic} learning setup, where the leader can interactively choose a randomized strategy and observe follower's strategic responses. Specifically, at each round $t\in [T]$, the leader commits to a strategy $\bx(t)$. The follower observes the committed $\bx(t)$ and responds based on the QR strategy $\by(t)$. Below we will consider both feedback settings based on whether the leader is able to observe the exact distribution $\by(t)$ or merely its samples. 

 We set our primary learning objective as to recover a full characterization of the follower's utility; our results below shall explain how it is unnecessary and almost unrealistic to expect an exact recovery of the follower's utility.  And we show in Observation \ref{prop:exp3-shift} and Theorem \ref{thm:shift} that such utility characterization can be used to compute the optimal leader strategy under both perfect rationality, known as the strong Stackelberg equilibrium (SSE), and bounded rationality, known as the quantal Stackelberg equilibrium (QSE). And besides developing the optimal (or robust) leader strategies, we believe the recovered utilities are generally useful for our better understanding and reasoning of the followers' motives. However, given the limited scope of the paper, we focus on the inverse game theory problems and defer the problems regarding how to strategize using the knowledge of game (i.e., the typical game-theoretical problems) to related and future work.

{Such learning problem has been considered in \cite{haghtalab2016three} specifically for Stackelberg security games, where the payoff is a strictly simplified single-dimensional linear utility function. Our paper overcomes the curse of dimensionality and answers the open question in recovering payoffs in the general Stackelberg game. On the other hand, \citet{sinha2015learning} showed a case of learning the nonparametric Lipschitz function (which includes the payoff function in the general Stackelberg game as a special case) in PAC-learning setup and they obtained a sample complexity exponential to the number of actions. Notably, the PAC-learning problem is fundamentally different from our active learning problem, as its learning guarantee is tied to the given data distribution and is not guaranteed to recover the follower's payoff.  }

\section{Theoretical Results}
\label{sec:theory}

\subsection{{Warm-up: Learning from Mixed Strategies}} \label{sec:mixed} 
As a warm-up, we first consider a rather ideal case where the leader can directly observe the follower's mixed strategy $\by(t)$. In this case, it turns out that the leader would be able to perfectly recover the follower's payoff matrix from his responses to $m$ different strategies and thereby determine the her optimal strategy.  Despite a seemingly intuitive result, its underlying rationale is actually not as straightforward. Specifically, many would raise the following doubt: 
the logit transformation is not bijective and thus its inverse mapping is not injective;
in particular, it only gives us a system of at most $n-1$ different linear equations w.r.t. the follower's utility; one can check that if we add a constant to all entries of the utility matrix, the resulting probability distribution stays the same after the logit transformation. Thus, it should require more than $m$ such linear equation systems to recover a utility matrix with $m\times n$ unknown parameters. However, thanks to Observation \ref{prop:exp3-shift}, it happens that the follower's utility matrix can be fully characterized by $m\times (n-1)$ parameters that is essentially the difference of each column in the utility matrix. This somewhat coincidentally compensates the missing information on follower utility due to the logit transformation. 

Knowing that $m$ strategies is the lower bound of this learning problem in general, below we will explicitly construct a learning algorithm that have the matching upper bound. To begin, a useful game-theoretic property of Stackelberg games is the following observation about the class of follower utilities that will   induce the same leader and follower policies. {While similar observation has been made in \cite{haghtalab2016three, sinha2015learning}, we also provide a formal proof in   Appendix \ref{append:prop-proof-1} for completeness.}

\begin{observation}[Equilibrium Invariance under Payoff Transformation]
\label{prop:exp3-shift}
For any $\tilde{V} \in \{ V + \bc \otimes 1_{n} | \bc \in \RR^{m}\}$, i.e., a row-wise shifted matrix of $V$, the follower's quantal response (resp. best response) policy to leader's committed strategy remains the same, and thus the optimal leader strategy in SSE or QSE remains the same. 
\end{observation}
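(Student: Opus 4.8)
The plan is to reduce everything to a single observation: adding a row-wise shift to $V$ perturbs the follower's expected payoff for \emph{every} action by the \emph{same} amount, for any fixed leader strategy, and both the quantal response (a softmax) and the best response (an argmax) are invariant to adding a common constant to all action scores.

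First I would fix an arbitrary leader strategy $\bx \in \Delta_m$ and write $\tilde V = V + \bc \otimes 1_n$, so that its $j$th column is $\tilde V_j = V_j + \bc$ for every $j \in [n]$. Computing the follower's expected payoff for action $j$ gives $\bx^\top \tilde V_j = \bx^\top V_j + \bx^\top \bc$. The crucial point is that the extra term $s := \bx^\top \bc$ is independent of $j$: it is the same additive constant across all $n$ follower actions.

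Next I would substitute this into the two response models. For the quantal response, the logit formula yields $\tilde y^*_j = \exp(\lambda(\bx^\top V_j + s)) / \sum_{k \in [n]} \exp(\lambda(\bx^\top V_k + s))$; factoring $\exp(\lambda s)$ out of both numerator and denominator cancels it and recovers exactly $y^*_j$. For the (perfectly rational) best response, $\argmax_j \{\bx^\top \tilde V_j\} = \argmax_j \{\bx^\top V_j + s\} = \argmax_j \{\bx^\top V_j\}$, since shifting every objective by the common constant $s$ does not move the maximizer. Hence, for this fixed $\bx$, the follower's response under $\tilde V$ coincides with that under $V$.

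Finally, since $\bx$ was arbitrary, the entire follower response mapping $\bx \mapsto \by^*(\bx)$ (or $\bx \mapsto j^*(\bx)$) is identical for $V$ and $\tilde V$. Because the leader's utility matrix $U$ is untouched by the shift, the leader's objective in either the SSE or QSE problem depends on the game only through $U$ and this response mapping; as both are unchanged, the objective — and therefore its maximizer, the optimal leader strategy — is unchanged. The step that most deserves care is this last one: I would state explicitly that SSE/QSE are defined by maximizing the leader's expected payoff over $\bx$ \emph{given the induced follower response}, so that invariance of the whole response mapping is genuinely sufficient to conclude invariance of the optimum, rather than merely invariance of the follower's behavior at a single strategy. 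The remainder is the short cancellation computation above, which poses no real obstacle.
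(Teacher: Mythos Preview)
Your proposal is correct and follows essentially the same approach as the paper's proof: fix an arbitrary leader strategy, observe that the row-wise shift adds the same constant $\bx^\top \bc$ to every action's expected payoff, cancel this constant in the softmax (for QR) and argmax (for BR), and then conclude invariance of SSE/QSE from invariance of the full response mapping. The only cosmetic difference is that the paper verifies the best-response claim via pairwise inequalities rather than writing $\argmax$ directly, but the content is identical.
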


Observation \ref{prop:exp3-shift} suggests that the row-wise shifted payoff matrix is just as good as the ground-truth payoff matrix in our setting. This essentially means that only the difference between action payoffs matters for the follower's policy. As such, we introduce a row-wise distance metric that accommodates such policy-invariant transformation to empirically measure the quality of the recovered follower utility.

\begin{definition}[Logit Distance]
\label{def:logit-distance}
We define a \emph{logit distance} between the ground truth follower utility $V$ and the recovered follower utility $\tilde{V} \in \RR^{m\times n}$,
$
 \Phi(V, \tilde{V} ) = \frac{1}{mn} \sum_{i \in [m]} \min_z \left\Vert V_i - \tilde{V}_i - z \right\Vert_1.
$
Whenever the distance $\Phi(V, \tilde{V} ) = 0$, we say that the recovered follower utility is perfect. 
\end{definition}

{We next present a result that generalizes the well-known result, \emph{three strategies to success in security games}, by \citet{haghtalab2016three}. Notably, we identify a simple but fundamental condition (in terms of rank) necessary to recover the game payoffs, rather than the special distance conditions tailored to the structure of the security game as in \cite{haghtalab2016three}. The notion of rank has a clear physical meaning and we would later follow this theoretical insights to design learning algorithm to actively select leader strategies to query. } 

\begin{proposition}[$m$ Strategies to Success]\label{prop:mixed-strategy}
There exists a learning algorithm that can always perfectly recover the follower strategy from $m$ queries of the follower's mixed strategies.
\end{proposition}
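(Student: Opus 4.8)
The plan is to linearize the quantal response map by taking logarithms, which reduces the recovery problem to $n-1$ independent $m \times m$ linear systems, and then to choose the $m$ query strategies so that each system has a unique solution. Fix any reference follower action, say action $n$. For a committed strategy $\bx$, the quantal response satisfies $\ln y^*_j - \ln y^*_n = \lambda\, \bx^\top (V_j - V_n)$ for every $j \in [n-1]$, because the intractable log-partition term $\ln \sum_{k} \exp(\lambda \bx^\top V_k)$ is common to all actions and cancels in the log-ratio. Writing $W_j := V_j - V_n \in \RR^m$, each query $\bx$ thus yields, for each $j$, one linear equation $\bx^\top W_j = \tfrac{1}{\lambda}\ln\!\big(y^*_j / y^*_n\big)$ in the unknown vector $W_j$. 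Note that the unknowns are exactly the $m(n-1)$ column-difference parameters, which matches the reduced parametrization implied by Observation \ref{prop:exp3-shift}; this is the coincidence alluded to in the warm-up discussion, whereby the information lost to the logit transformation is precisely the information rendered irrelevant by row-wise shift invariance.

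Next I would collect the queries into a matrix. Let the leader issue $\bx(1), \dots, \bx(m)$ and stack them as the rows of $X \in \RR^{m \times m}$; the $j$-th system then reads $X W_j = b_j$, where $b_j \in \RR^m$ records the measured log-ratios $\tfrac{1}{\lambda}\ln(y^*_j(t)/y^*_n(t))$ across the $m$ rounds. As long as $X$ is invertible, every $W_j$ is uniquely determined by $W_j = X^{-1} b_j$, so all $m(n-1)$ parameters are recovered simultaneously from the same $m$ queries.

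The key step is to guarantee that $X$ can be made full rank using strategies drawn from $\Delta_m$. Since the leader may commit to any strategy in the simplex, in particular the pure strategies, querying the $m$ vertices $\bx(t) = e_t$ gives $X = I_m$, and the systems are solved by direct read-off: $W_j[t] = \tfrac{1}{\lambda}\ln(y^*_j(t)/y^*_n(t))$. More generally, any $m$ affinely independent points of $\Delta_m$ are linearly independent in $\RR^m$ and hence yield an invertible $X$, which is exactly why $m$ queries are required and why this matches the claimed lower bound. I would also record that the log-ratios are always well defined, as the quantal response assigns strictly positive probability to every action for every $\bx$.

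Finally I would reconstruct a utility matrix and verify perfect recovery. Setting the reference column $\tilde{V}_n = 0$ and $\tilde{V}_j = W_j$ for $j \in [n-1]$ produces a $\tilde{V}$ that differs from $V$ by subtracting the constant $v_{i,n}$ from each row $i$; that is, $\tilde{V} = V + \bc \otimes 1_n$ with $\bc = -V_n$. By Observation \ref{prop:exp3-shift} this row-wise shift induces the identical follower policy, and by Definition \ref{def:logit-distance} the logit distance $\Phi(V, \tilde{V}) = 0$, so the recovery is perfect. I expect the full-rank condition on $X$ to be the genuine crux, since it is what pins the number of queries to exactly $m$, but it is dispatched cleanly by querying the simplex vertices; the remaining steps are routine algebra on the logit inversion.
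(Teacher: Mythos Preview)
Your argument is correct. Both your proof and the paper's establish that $m$ linearly independent leader strategies suffice to pin down $V$ up to a row-wise shift, and both ultimately invert the same linear system; the difference lies in how the logit map is unraveled. You take log-ratios against a fixed reference action $n$, which cancels the log-partition term directly and yields $X W_j = b_j$ for the column differences $W_j = V_j - V_n$. The paper instead frames the recovery as minimizing a cross-entropy (log-sum-exp) objective over proxy variables $\bz(t) = \lambda \bx(t)^\top \tilde{V}$, characterizes the entire minimizer set as $\{z_i(t) = \ln y_i(t) + c_t : c_t \in \RR\}$, and then solves $\lambda X^\top \tilde{V} = \ln Y + \bc \otimes 1_n$, showing that every choice of $\bc$ lands in the row-shift equivalence class of $V$. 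Your route is more elementary and transparent for this warm-up statement; the paper's optimization framing is less direct here but pays off downstream, since the same convex program \eqref{al:solver} is reused verbatim as the estimator in the realized-action setting (the \vertex{} algorithm) where one plugs in empirical distributions $\tilde{\by}(t)$ and wants a well-posed objective even when some empirical counts vanish or are noisy. In short, both proofs are sound; yours is the cleaner existence argument, while the paper's is engineered for reuse.
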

\begin{proofsketch}
We pick $m$ linearly independent basis vectors for each $\bx(t)$ in $m$ rounds and argue that the following optimization program can perfectly recover the follower's utility matrix $\tilde{V}$. 
\begin{lp}\label{al:solver}
\mini{ \sum_{t\in [m]} \left[ \log \sum_{j\in [n]} \exp z_j(t) - \by(t)\cdot \bz(t) \right] }
\qcon{ \bz(t) = \lambda \bx(t)^\top \tilde{V} }{  t \in [m]} 
\end{lp}
We can see that the objective of the optimization program is a log-sum-exp function w.r.t. variables $\{\bz(t) \}_{t\in [m]} $, which is convex. This means we can determine its minimizer set of $\{\bz(t) \}_{t\in [m]} $. Meanwhile, the constraints of the optimization program gives a system of linear equation between $\{\bz(t), \bx(t) \}_{t\in [m]} $ and  the variable $\tilde{V}$. But the solution of $\tilde{V}$ is not unique, as the minimizer set of $\{\bz(t) \}_{t\in [m]} $ contains infinitely many elements. But it turns out that when $\{\bx(t) \}_{t\in [m]} $ forms an linearly independent basis of $\RR^m$, any solution $\tilde{V}$ to the linear system given by any minimizer $\{\bz(t) \}_{t\in [m]} $ are guaranteed to have $\Phi(V, \tilde{V}) = 0$. We defer the full proof to Appendix \ref{append:prop-proof-2}.
\end{proofsketch}

\subsection{{More Realistic Situations: Learning from Realized Actions}}\label{sec:realized}  
In this section, we consider the more challenging yet realistic scenario, where the leader is able to observe a single action from follower at each round, i.e., the best response w.r.t. his perceived utility under the Gumbel noise, or equivalently the realized action of the follower's quantal response strategy. It turns out that the intuitions from Section \ref{sec:mixed} still apply, and we are able to prove a strict generalization of these results. In particular, Theorem \ref{thm:shift} strengthens Observation \ref{prop:exp3-shift} in that learning the follower's utility up to some logit distance could also lead to an approximation of the optimal leader strategy under some mild condition given by Definition \ref{def:inducibility} in general Stackelberg games. Theorem \ref{thm:realized-action} generalizes Proposition \ref{prop:mixed-strategy}, as we showcase the sample complexity of our learning framework to recover the follower's utility in face of the follower's stochastic responses.

\begin{definition}[Inducibility Gap] \label{def:inducibility}
For any follower utility $V$, we define its inducibility gap as
\begin{equation*}
 \sigma(V) \coloneqq \min_{j\in [n]} \max_{x\in \Delta_{m}}  \min_{j' \neq j} \bx^\top V [e_{j} - e_{j'}].
\end{equation*}
That is, the maximum constant $\sigma(V)$ such that for any follower actions $j \in [n]$, there exists a leader strategy $\bvec{x}^j$ that makes $j$ dominate any other action $j'$ by a margin of at least $\sigma(V)$, i.e., $\bvec{x}^jV e_j \geq \bvec{x}^jV e_{j'} + \sigma(V), \forall j' \neq j\in [n]$.
\end{definition}
{If a game has small inducibility gap $\sigma$, then there must exsit two follower actions $j, j'$ such that the follower's utility for action $j$ can never be $\delta$ better than his utility for action $j'$, regardless of what strategies the leader play. In such cases, action $j$ is essentially dominated by $j'$ (up to at most $\delta$). It is not difficult to see that in such case with small $\delta$ it will be difficult to recover all the payoffs in such cases since action $j$ is expected to be played very rarely. This intuition is also reflected in our following two results.} 
\begin{theorem}\label{thm:shift}
Given a follower utility $\tilde{V}$ with inducibility gap $\sigma(\tilde{V}) > 5\epsilon$, we can construct an $O(\epsilon/\sigma(\tilde{V}))$-optimal leader strategy for any game $\cG(U,V)$ with logit distance $\Phi(\tilde{V}, V) \leq \frac{\epsilon}{mn}$.
\end{theorem}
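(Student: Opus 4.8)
The plan is to read ``optimal'' here as the strong Stackelberg (perfect-rationality) value of the true game $\cG(U,V)$ and to argue an approximate optimal-commitment result: the leader knows her own $U$ and the recovered $\tilde V$, computes a \emph{margin-robust} commitment against $\tilde V$, and I show it stays near-optimal against the unknown true $V$. First I would convert the logit-distance hypothesis into an honest entrywise bound via Observation~\ref{prop:exp3-shift}. For each row $i$ pick the shift $z_i$ attaining $\min_z\|V_i-\tilde V_i-z\|_1$ and set $\tilde V'=\tilde V+z\otimes 1_n$; by the observation $\tilde V'$ induces identical responses and equilibria as $\tilde V$, while $\sum_{i,j}|v_{ij}-\tilde v'_{ij}|=mn\,\Phi(\tilde V,V)\le\epsilon$. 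Hence each column of the difference has $\ell_1$-mass at most $\epsilon$, and because row shifts cancel in every action difference $e_j-e_{j'}$, for all $\bx\in\Delta_m$ the induced-margin discrepancy is controlled: $\bigl|\bx^\top(V-\tilde V')(e_j-e_{j'})\bigr|\le 2\epsilon$.

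The construction I would analyze is the following: for each target action $j$, solve the LP maximizing $\bx^\top U_j$ over $\bx\in\Delta_m$ subject to $\bx^\top\tilde V(e_j-e_{j'})\ge\tau$ for all $j'\neq j$, with $\tau=3\epsilon$, and let $\hat\bx$ (inducing $\hat j$) be the best of these $n$ solutions. \textbf{Soundness (transfer).} Since $\hat\bx$ induces $\hat j$ by margin $\ge\tau$ under $\tilde V$ and margins move by at most $2\epsilon$, under the true $V$ we get $\hat\bx^\top V(e_{\hat j}-e_{j'})\ge\tau-2\epsilon=\epsilon>0$; the follower's best response to $\hat\bx$ in $\cG(U,V)$ is therefore exactly $\hat j$ (no ties), so the realized leader value equals the value $\hat\bx^\top U_{\hat j}$ computed against $\tilde V$.

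\textbf{Completeness (where $\sigma$ enters).} Let $\bx^{*}$ be a true SSE commitment inducing $j^{*}$ with value $\mathrm{OPT}$, so $\bx^{*\top}V(e_{j^{*}}-e_{j'})\ge0$ and hence $\bx^{*\top}\tilde V(e_{j^{*}}-e_{j'})\ge-2\epsilon$. Let $\bx^{*}_{j^{*}}$ be the inducibility witness for $j^{*}$ under $\tilde V$ from Definition~\ref{def:inducibility}, with $\bx^{*\top}_{j^{*}}\tilde V(e_{j^{*}}-e_{j'})\ge\sigma(\tilde V)$. The crux is the mixture $\hat\bx_{\mathrm{mix}}=(1-\eta)\bx^{*}+\eta\,\bx^{*}_{j^{*}}$, which trades induced margin for leader utility at exchange rate $\sigma(\tilde V)$: its margin under $\tilde V$ is $\ge-2\epsilon+\eta\,\sigma(\tilde V)$, so $\eta=5\epsilon/\sigma(\tilde V)$ makes it a feasible margin-$\tau$ commitment, while its value is $\ge(1-\eta)\mathrm{OPT}\ge\mathrm{OPT}-\eta$. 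The hypothesis $\sigma(\tilde V)>5\epsilon$ is exactly what forces $\eta<1$, i.e.\ a valid mixture. As $\hat\bx$ is optimal among margin-$\tau$ commitments against $\tilde V$, its value is at least that of $\hat\bx_{\mathrm{mix}}$; combined with the transfer step this gives realized value $\ge\mathrm{OPT}-5\epsilon/\sigma(\tilde V)=\mathrm{OPT}-O(\epsilon/\sigma(\tilde V))$.

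The step I expect to be the main obstacle is the two-sided perturbation bookkeeping: the logit distance bounds only a sum over rows, not each entry, so I must be careful that what actually matters is the per-column $\ell_1$-mass entering solely through shift-invariant action differences, and that the constants line up so that $\tau>2\epsilon$ (for transfer) and $\eta<1$ (for the mixture) hold simultaneously precisely when $\sigma(\tilde V)>5\epsilon$. I would close by remarking that the quantal version is softer: there the leader value is a $\lambda$-smooth function of the softmax logits, so closeness of $\tilde V'$ to $V$ directly bounds $\|\by(\bx;V)-\by(\bx;\tilde V')\|_1$ and yields an analogous guarantee without the mixture argument.
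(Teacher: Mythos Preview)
Your proof is correct and rests on the same two primitives as the paper's: (i) the shift-invariant margin bound $|\bx^\top(V-\tilde V)(e_j-e_{j'})|\le 2\epsilon$ (the paper's Lemma~\ref{lm:product-infty-norm-bound}), and (ii) a convex mixture with an inducibility witness to manufacture strict margin at cost linear in $\epsilon/\sigma$. The packaging, however, is genuinely different. The paper first computes the SSE $(\tilde\bx^*,\tilde j^*)$ of $\cG(U,\tilde V)$ and \emph{plays} the mixture $(1-\eta)\tilde\bx^*+\eta\,\bx^{\tilde j^*}$; it then needs a separate lemma (Lemma~\ref{lm:bounded-SSE-diff}) bounding $|\tilde U^*-U^*|$ and another (Lemma~\ref{lm:bounded-inducibility-gap}) relating $\sigma(V)$ to $\sigma(\tilde V)$, because its construction is phrased in terms of the unknown $\sigma(V)$. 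You instead solve the margin-$\tau$ LP directly over $\tilde V$ and use the mixture only as a \emph{feasibility witness} in the completeness step. This buys you two things: the construction references only $\tilde V$ and $\sigma(\tilde V)$ throughout (no detour through $\sigma(V)$), and the final constant is slightly sharper, $5\epsilon/\sigma(\tilde V)$ versus the paper's $6\epsilon/(\sigma(\tilde V)-2\epsilon)$. Conversely, the paper's route makes the two halves of the error (transfer loss and SSE-value discrepancy) explicit as separate lemmas, which may be pedagogically useful. Your closing remark on the quantal case is a nice addition not present in the paper's proof.
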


\begin{proofsketch}
We prove through an explicit construction. That is, given the estimate of the follower's utility $\tilde{V}$, we construct a $\epsilon$-robust strategy $\bx = (1-\frac{3\epsilon}{\sigma({V})}){\tilde{\bx}}^* + \frac{3\epsilon}{\sigma({V})} {\bx}^{\tilde{j}^*}$ based on the SSE $(\tilde{\bx}^*, \tilde{j}^*)$ of $\cG(U, \tilde{V})$ and the strategy ${\bx}^{\tilde{j}^*}$ such that $({\bx}^{\tilde{j}^*})^\top {V} e_{\tilde{j}^*} \geq ({\bx}^{\tilde{j}^*})^\top {V} e_{j'} + \sigma(V), \forall j' \neq \tilde{j}^*$. 
We show this strategy is guaranteed to be an $(\frac{6\epsilon}{\sigma(\tilde{V}) - 2\epsilon})$-SSE of the Stackelberg game $\cG(U, V)$. 
The proof then relies on two key observations stated in Lemma \ref{lm:same-response} and \ref{lm:bounded-SSE-diff}: First, given that $\Phi(\tilde{V}, V) \leq \frac{\epsilon}{mn}$ and $\sigma(V) > 3\epsilon$, the best response of a robust strategy $\bx$ in game $\cG(U, V)$ remains the same as that of a game $\cG(U, \tilde{V})$, and so is the leader utility. This means $\bx$ gets at least $(1-\frac{3\epsilon}{\sigma(V)})$ portion of SSE utility in $\cG(U, \tilde{V})$. Second, the difference between the SSE utility in $\cG(U, {V})$ and $\cG(U, \tilde{V})$ are bounded by $\frac{3\epsilon}{\sigma(V)} $.  Meanwhile, even though $V$ is unknown to us, Lemma \ref{lm:bounded-inducibility-gap} shows that we can bound $\sigma(V) \geq \sigma(\tilde{V}) - 2\epsilon$, so we can use $\sigma(\tilde{V}) - 2\epsilon $ to substitute $\sigma(V)$. And this requires $\sigma(\tilde{V}) \geq 5\epsilon $.
\end{proofsketch}
Due to the space limit, we defer the full statement of the lemmas and proofs to the Appendix \ref{append:shift-thm}. After restoring the connections between the logit distance and the leader's optimal equilibrium utility, we now show the relationship between the logit distance and sample complexity in the learning problem. {We remark that by satisfying our full rank condition, this sample complexity result does not depend on any additional parameter on the distance of queried leader strategies, such as $\lambda, \nu$ in \cite{haghtalab2016three}, both of which are only guaranteed to affect the sample complexity by polynomial (not necessarily linear) factors w.r.t. the number of targets.}

\begin{theorem}\label{thm:realized-action}
It takes $\Theta(\frac{ m\log (mn/\delta) }{\rho \epsilon^2})$ queries of the follower's quantal response to recover the follower's utility $\tilde{V}$ within the logit distance $\Phi(V, \tilde{V}) = \frac{\epsilon}{\lambda} $ with probability at least $1-\delta$, where $\rho$ is the least non-zero measure among all of the follower's mixed strategies induced by leader's strategy queries during learning.
\end{theorem}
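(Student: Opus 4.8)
The plan is to establish both directions of the $\Theta$ bound. For the upper bound I would reuse the $m$ linearly independent leader strategies $\{\bx(t)\}_{t\in[m]}$ from Proposition~\ref{prop:mixed-strategy}; the only difference is that, since now only realized actions are observed rather than the exact distribution, I replace each $\by(t)$ by the empirical frequency $\hat{\by}(t)$ computed from $N$ i.i.d.\ samples of the follower's quantal response to $\bx(t)$, and feed $\hat{\by}(t)$ into the convex program~\eqref{al:solver}. The total query budget is $T = mN$, so it remains to choose $N = \Theta(\frac{\log(mn/\delta)}{\rho\epsilon^2})$ and argue correctness.

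First I would prove a concentration bound on $\hat{\by}(t)$. Each coordinate $\hat{y}_j(t)$ is the mean of $N$ Bernoulli indicators with parameter $y_j(t)\geq\rho$, so a multiplicative Chernoff bound together with a union bound over all $mn$ coordinates shows that, with $N = \Theta(\frac{\log(mn/\delta)}{\rho\epsilon^2})$, every coordinate satisfies the relative-error guarantee $|\hat{y}_j(t)-y_j(t)|\leq \epsilon\, y_j(t)$ with probability at least $1-\delta$. The factor $1/\rho$ enters precisely because the rarest action, whose mass can be as small as $\rho$, requires proportionally more samples to be pinned down to the same relative accuracy; the $\log(mn/\delta)$ is the usual union-bound confidence term.

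The heart of the argument, and the step I expect to be the main obstacle, is a stability analysis that converts this estimation error in $\by$ into logit distance of the recovered $\tilde{V}$. The minimizer of~\eqref{al:solver} recovers the log-ratios $z_j(t)-z_k(t)=\log\!\big(\hat{y}_j(t)/\hat{y}_k(t)\big)$, and under relative error $\epsilon$ each such log-ratio is perturbed by only $O(\epsilon)$ thanks to the smoothness of the softmax/logit map away from the boundary. Since $z_j(t)=\lambda\,\bx(t)^\top\tilde{V}_j$ and $\{\bx(t)\}$ is a basis of $\RR^m$, inverting the resulting linear system recovers each column difference $\tilde{V}_j-\tilde{V}_k$ --- equivalently all within-row differences $\tilde v_{i,j}-\tilde v_{i,k}$ that the logit distance of Definition~\ref{def:logit-distance} measures --- with per-entry error $O(\epsilon/\lambda)$, which after summing over the $m(n-1)$ effective parameters and dividing by $mn$ yields $\Phi(V,\tilde{V})=\epsilon/\lambda$ (after rescaling $\epsilon$ by constants). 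The delicate point is that the conditioning of the linear inversion and the floor $\rho$ on the induced probabilities are coupled through the choice of basis: I must pick $\{\bx(t)\}$ so that the inversion is well-conditioned (keeping $\|X^{-1}\|$ controlled, where $X$ has rows $\bx(t)^\top$) while the induced action probabilities stay bounded below by $\rho$, so that the sensitivity of $\log(\cdot)$ near small probabilities does not blow up the error.

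Finally, for the matching lower bound I would use a two-point (Le Cam) / Fano information-theoretic argument: construct a pair of follower utilities whose induced action distributions agree except on one coordinate of mass $\approx\rho$, where they differ by an amount corresponding to logit distance just above $\epsilon/\lambda$. The KL divergence between the two per-query sampling distributions is then $O(\rho\epsilon^2)$, so Pinsker's inequality forces $\Omega(\frac{\log(1/\delta)}{\rho\epsilon^2})$ queries to distinguish them at confidence $1-\delta$; embedding $m$ such perturbations along the $m$ basis directions and taking a packing/union bound over the $mn$ coordinates supplies the extra $m\log(mn)$ factor, matching the upper bound up to constants.
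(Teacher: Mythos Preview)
Your proposal is correct and follows essentially the same route as the paper: the upper bound combines a multiplicative Chernoff concentration for the empirical $\hat{\by}(t)$ (the paper's Lemma~\ref{lm:multiplicative-approx}) with a stability analysis of the log-inverse of program~\eqref{al:solver} (the paper's Lemma~\ref{lm:realized-action}), and the lower bound is information-theoretic. The only notable differences are that the paper resolves the conditioning concern you flag simply by taking $X=I$ (the $m$ pure strategies, so $\|X^{-1}\ln\beta\|_1\le mn\epsilon$ trivially), and its lower bound invokes Assouad's lemma on a family of $n-1$ hard-to-distinguish distributions rather than your Le~Cam/Fano two-point construction.
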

This theorem is a strict generalization of Proposition \ref{prop:mixed-strategy} and we defer the full proof to Appendix \ref{append:learning-thm}
 due to space limit. The high level intuition comes from the fact that $(1-\epsilon)$-multiplicative approximation guarantee is translated to $\epsilon$ additive error after the logarithmic transformation using the approximation that for small positive $\epsilon$ close to zero, we have $\ln(\frac{1}{1-\epsilon}) = O(\epsilon)$. And to obtain such $(1-\epsilon)$-multiplicative approximation of an mixed strategy, we use standard concentration results for a tight sample complexity bound.  We formalize these statements and proofs in Lemma \ref{lm:realized-action}, \ref{lm:multiplicative-approx}.

\begin{lemma}\label{lm:realized-action}
There exists a learning algorithm that can recover the follower's utility $\tilde{V}$ within the logit distance $\Phi(V, \tilde{V}) = O(\frac{\epsilon}{\lambda})$ from $m$ queries of the $(1-\epsilon)$-multiplicative approximation of the follower's mixed strategies.
\end{lemma}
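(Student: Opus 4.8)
The plan is to exhibit the concrete algorithm and then track how the multiplicative error in the observed responses propagates through the recovery. First I would choose the $m$ queries to be the pure leader strategies $\bx(t) = e_t$ for $t \in [m]$, which are trivially linearly independent and hence a valid instantiation of the algorithm in Proposition~\ref{prop:mixed-strategy}. For such a query the logit model gives $\bx(t)^\top V_j = v_{t,j}$, so taking logarithms and differencing two follower actions $j$ and $n$ cancels the normalizing log-partition term and yields the exact identity $\log y_j(t) - \log y_n(t) = \lambda (v_{t,j} - v_{t,n})$. In other words, from the \emph{exact} responses the column differences $v_{t,j} - v_{t,n}$ are recovered by a single linear solve, which is precisely what the convex program~\eqref{al:solver} computes, since its first-order optimality condition forces the induced distribution to match the observed one. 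The recovered matrix $\tilde V$ is then defined entrywise by $\tilde v_{t,j} - \tilde v_{t,n} = \tfrac{1}{\lambda}\big(\log \hat y_j(t) - \log \hat y_n(t)\big)$, fixing the reference column $\tilde v_{t,n} = 0$; by Observation~\ref{prop:exp3-shift} this reference choice is immaterial for the logit distance.

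Next I would convert the multiplicative guarantee on $\hat\by(t)$ into an additive guarantee on the log-transformed values. Since each $\hat y_j(t)$ is a $(1-\epsilon)$-multiplicative approximation of $y_j(t)$, we have $|\log \hat y_j(t) - \log y_j(t)| \le \ln\frac{1}{1-\epsilon} = O(\epsilon)$ by the elementary estimate noted in the text. Differencing against the reference action at most doubles this, so each observed log-difference $\log \hat y_j(t) - \log \hat y_n(t)$ lies within $O(\epsilon)$ of the true value $\lambda(v_{t,j}-v_{t,n})$. Dividing by $\lambda$ then shows that every recovered column difference satisfies $\big|(\tilde v_{t,j} - \tilde v_{t,n}) - (v_{t,j} - v_{t,n})\big| = O(\epsilon/\lambda)$.

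Finally I would assemble these per-entry bounds into the logit distance. For each leader action (row) $t$, I would choose the free shift $z = v_{t,n} - \tilde v_{t,n}$ in the definition of $\Phi$ so as to align the reference columns; the resulting $j$-th coordinate of $V_t - \tilde V_t - z$ is exactly the recovered column-difference error above, which is $O(\epsilon/\lambda)$ and vanishes at $j = n$. Summing the $n$ coordinates and averaging over the $m$ rows gives $\Phi(V, \tilde V) \le \frac{1}{mn}\cdot m\cdot (n-1)\cdot O(\epsilon/\lambda) = O(\epsilon/\lambda)$, using exactly $m$ queries.

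The step I expect to be the crux is the \emph{choice of queries}. For a general linearly independent basis $\{\bx(t)\}$, the recovery requires inverting the design matrix $X$ whose rows are $\bx(t)^\top$, and the $O(\epsilon)$ additive log-errors would then be amplified by $\lVert X^{-1}\rVert$, which can be arbitrarily large for an ill-conditioned basis. Selecting the standard basis of pure strategies makes $X = I$ and removes this amplification entirely, so the only remaining subtlety is the clean accounting above; this is why the guarantee degrades only by the benign $1/\lambda$ factor rather than by a conditioning-dependent quantity.
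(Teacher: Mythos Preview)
Your proposal is correct and follows essentially the same approach as the paper. The paper likewise selects the pure strategies so that the design matrix is the identity, writes the observed responses as $\tilde Y = \hat Y \circ \beta$ with each entry of $\beta$ in $[1-\epsilon,\tfrac{1}{1-\epsilon}]$, and then reads off $\tilde V = V + \lambda^{-1}(X^{-1})^\top \ln\beta + (\text{row shift})$, bounding the logit distance by $\tfrac{1}{mn}\lVert \lambda^{-1}\ln\beta\rVert_1 = O(\epsilon/\lambda)$; your ``fix a reference column and track log-differences'' variant is just a concrete instantiation of the same computation, and your closing remark about $\lVert X^{-1}\rVert$ forcing the choice $X=I$ is exactly the point the paper makes.
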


\begin{lemma}\label{lm:multiplicative-approx}
For any discrete distribution $\by$ with support size $n$ and the least non-zero measure $\min_{i\in [n], y_i>0} \{ y_i \} \geq \rho$, with $\Theta(\frac{\log (n/\delta) }{ \rho \epsilon^2})$ samples, the corresponding empirical distribution $\hat{\by}$ is an $(1-\epsilon)$-multiplicative approximation to $\by$, with probability at least $1-\delta$.
\end{lemma}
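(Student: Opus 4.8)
The plan is to analyze the empirical distribution one coordinate at a time, invoke a multiplicative Chernoff bound per coordinate, and then stitch the guarantees together with a union bound over the support. Concretely, suppose we draw $N$ i.i.d.\ samples $X_1,\dots,X_N$ from $\by$ and set $\hat{y}_i = \frac{1}{N}\sum_{t=1}^{N}\mathbf{1}[X_t = i]$, so that $N\hat{y}_i$ is a sum of $N$ independent Bernoulli$(y_i)$ variables with mean $N y_i$. By an $(1-\epsilon)$-multiplicative approximation I mean $|\hat{y}_i - y_i|\le \epsilon y_i$ for every $i$ with $y_i > 0$; coordinates outside the support are never sampled, so $\hat{y}_i = 0 = y_i$ there and they are trivially exact, which is why only the support size $n$ (and the floor $\rho$ on its atoms) enters the bound.

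First I would fix a support coordinate $i$ and apply the two-sided multiplicative Chernoff bound to $N\hat{y}_i$, obtaining $\Pr\!\left[\,|\hat{y}_i - y_i|\ge \epsilon y_i\,\right]\le 2\exp(-N y_i \epsilon^2/3)$. Because the least non-zero measure satisfies $y_i \ge \rho$, the exponent is uniformly controlled and this probability is at most $2\exp(-N\rho\epsilon^2/3)$, a bound independent of $i$. Next I would union-bound this failure event over the at most $n$ support coordinates, so the probability that \emph{any} coordinate violates the multiplicative guarantee is at most $2n\exp(-N\rho\epsilon^2/3)$. Forcing this quantity below $\delta$ gives the threshold $N \ge \frac{3}{\rho\epsilon^2}\ln(2n/\delta)$, i.e.\ $N = O\!\left(\frac{\log(n/\delta)}{\rho\epsilon^2}\right)$ samples suffice, which establishes the upper-bound direction of the $\Theta(\cdot)$ claim.

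For the matching lower bound I would argue that estimating even a single atom of mass $\approx \rho$ to within a $(1\pm\epsilon)$ factor already requires $\Omega(1/(\rho\epsilon^2))$ samples, since the relative standard deviation of the corresponding Binomial count is $\Theta(1/\sqrt{N\rho})$; meanwhile the $\log(n/\delta)$ factor is the price of demanding the guarantee to hold simultaneously across all $n$ coordinates at confidence $1-\delta$. Together these give $N = \Theta\!\left(\frac{\log(n/\delta)}{\rho\epsilon^2}\right)$.

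The substantive steps are routine concentration, so the only real care is in two places. First, I must use the correct \emph{two-sided} multiplicative Chernoff constant so that both the under- and over-estimation tails are simultaneously controlled by the single bound $2\exp(-N\rho\epsilon^2/3)$; a one-sided version would only certify $\hat{y}_i \ge (1-\epsilon)y_i$ and not the full multiplicative sandwich. Second, the genuine obstacle is making the lower-bound half of the $\Theta$ rigorous rather than heuristic --- the upper bound via Chernoff plus union bound is entirely standard --- which I would handle through a Paley--Zygmund / anti-concentration argument applied to the least-likely atom of mass $\rho$.
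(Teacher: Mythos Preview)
Your upper-bound argument is essentially identical to the paper's: both apply the multiplicative Chernoff bound per support coordinate, use the floor $\rho$ to make the exponent uniform, and finish with a union bound over the $n$ atoms. Your presentation is in fact slightly cleaner (you are explicit about the two-sided constant and about the trivial handling of zero-mass coordinates).

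The lower-bound routes differ, and yours has a gap. The paper proceeds via an information-theoretic reduction to hypothesis testing: it builds a family of $n-1$ distributions, each with all atoms at mass $\rho$ except one boosted to $\rho+3\epsilon\rho$, argues that any $(1-\epsilon)$-multiplicative estimator identifies which atom is boosted, and then invokes a pairwise-testing bound (they frame it as Assouad) with squared Hellinger distance $\Theta(\rho\epsilon^2)$ between neighbours. The $\log(n/\delta)$ factor is produced by the \emph{multiplicity} of hypotheses together with the confidence requirement, not by any per-coordinate argument.

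By contrast, a Paley--Zygmund or anti-concentration bound on a single atom of mass $\rho$ can only show that, with some \emph{constant} probability, $|\hat y_i-y_i|>\epsilon y_i$ when $N=o(1/(\rho\epsilon^2))$; this yields $\Omega(1/(\rho\epsilon^2))$ but not the $\log(n/\delta)$ factor. Your sentence ``the $\log(n/\delta)$ factor is the price of demanding the guarantee to hold simultaneously across all $n$ coordinates'' is a correct intuition for the \emph{upper} bound (that is where the union bound costs a $\log n$), but it is not a lower-bound argument: to make it rigorous you would need to exhibit a hard instance with many $\rho$-mass atoms and show that the per-atom failure events are, say, sufficiently \emph{un}correlated that at least one of them fails with probability $>\delta$ unless $N\gtrsim \log(n/\delta)/(\rho\epsilon^2)$. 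That step is exactly what the paper's testing construction accomplishes (implicitly), and it is the missing piece in your sketch.
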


\subsection{A Learning Framework of Practicality}\label{sec:algos}
\paragraph{\vertex{}, Less is More} The above results lead to a simple but provably effective method, \vertex{}; the name comes from the fact that it only uses the $m$ different pure strategies in $\Delta_m$, $\{ \bx(t) \}_{t\in [m]}$. As specified in the proof of Theorem \ref{thm:realized-action}, it gathers the follower's sampled quantal responses of these pure strategies to estimate the corresponding empirical distributions $\{ \tilde{\by}(t) \}_{t\in [m]}$ and solves for the $\tilde{V}$ through the optimization program \ref{al:solver}. While it is a seemingly naive learning algorithm, we would like to make a few crucial points on its unique advantages from both theoretical and practical perspectives. 

Theoretically, we know \vertex{} is guaranteed to perfectly recover the follower utility in the setting of Section \ref{sec:mixed}. More importantly, when randomness is present, \vertex{} guarantees that the estimation error measured by the logit distance is always bounded as $O(\frac{\epsilon}{\lambda})$; the Equation \eqref{eq:est-error} in the proof of Theorem \ref{thm:realized-action} suggests that the inverse of a general row-stochastic matrix $X$ and the error matrix $\beta$ could otherwise lead to possibly unbounded estimation error. 

Meanwhile, we anticipate that the simplicity of \vertex{} would be especially valuable to its applicability in practice. First, the randomized leader strategies in many applications are difficult to be implemented precisely, because the followers may not have the perfect estimation of the leader's distributions of randomization. This means that observing the follower's responses to randomized leader strategies could be more noisy in nature. Second, it might be inappropriate and possibly forbidden for the learner (e.g., an Internet platform or policy marker) to frequently change its strategies (e.g., prices or policies). Instead, the deployment of \vertex{} only requires the learner to observe the responses of only a small number of pure strategies at the population level.

\paragraph{\vertex{} for Structured Games} We remark that the learning framework of \vertex{} could be tailored to the special structures in Stackelberg game. For example, let us consider a celebrated variant, known as the Stackelberg security game.\footnote{For simplicity, we here present a standard simplification of Stackelberg security game, where the resources allocation and scheduling constraints are ignored and the defender's strategy space is simply the simplex $\Delta_m$. Our method can be extended to security games under the general definition by carefully picking strategies on the vertices of the constrained strategy space. 
} Namely, a leader (defender) commits to a randomized allocation of security resource to defend a set of $n(=m)$ targets from a follower (attacker). In turn, the follower observes this randomized allocation and picks a target to attack. Both the leader and the follower receive payoffs depending on the target that was attacked and the probability that it was defended. So in this case the follower utility can be expressed as linear functions, where each entry in vector $\bw, \bb \in \RR^n$ denotes, respectively, the attacker's cost and reward on each target. When the leader defends each target with the randomized strategy $\bx\in \Delta_n$, if the follower attacks the target $j$, he receives utility based on the cost w.r.t. the chance target $j$ is defended, and the reward for the attack, i.e., $V(\bx, j) = w_j x_j + b_j$. Then, we can use the learning framework of \vertex{} that only solves for the linear utility function parameters using the optimization program \ref{al:sec-solver}. This not only reduces the number of parameters to be learnt but also directly gives the reward and cost parameters of each targets. Our empirical experiments below suggest a significantly faster error convergence rate once the structure insights is brought into the learning framework.
\vspace{-3mm}

\begin{lp}\label{al:sec-solver}
\mini{ \sum_{t\in [d]} \left[ \log \sum_{j\in [n]} \exp z_j(t) - \tilde{\by}(t)\cdot \bz(t) \right] }
\qcon{ \bz_j(t) = \lambda (w_j x_j(t) + b_j) }{j\in [n], t\in [T]}
\end{lp}
\vspace{-5mm}

\paragraph{\vertexplus{} for the Worst Cases} In certain situations, however, the followers could be more rational and the parameter $\lambda$ is larger than the standard estimation. Then, the follower's stochastic quantal response becomes rather deterministic, and the least non-zero measure $\rho$ decreases.  Lemma \ref{lm:multiplicative-approx} suggests that querying through simple pure strategies could become much less inefficient in obtaining the $(1-\epsilon)$-multiplicative approximation of the actual strategy. Nevertheless, it turns out that we can introduce the ``exploration and exploitation'' principle here for the remedy, and we thus name such variant of \vertex{} algorithm as \vertexplus{}. Specifically, we introduce an exploration procedure to search for better strategies if an empirical estimation of the follower strategy tends to concentrate on a single action. We knew such strategy would contain more noise than information, as the error introduced by its multiplicative approximation ratio can be significant; reversing a one-hot distribution from logit transformation provides no information about the follower utility. In this case, we carefully replace it by a perturbed strategy from the original strategy. This ensures that the resulting strategy set after replacement still forms a full-rank matrix that ensures the invertibility necessary for a provably more effective recovery of $V$ in Theorem \ref{thm:realized-action}. Otherwise, the algorithm would continue to exploit the leader strategies to better estimate the follower responses. Our empirical experiments show substantial performance improvement by \vertexplus{} in those extreme cases. 

\begin{algorithm}[h]
    \caption{\vertexplus{} }
    \label{al:vsplus}
    \begin{algorithmic}[1] %
        \STATE \textbf{Input:} Game parameters $m,n, \lambda$, QR oracle $\cO: \Delta_m \to [n]$ and optimization program $\cQ$ based on the game structure.
        \STATE \textbf{Initialization:} $\cX$, a list of leader strategies where the $i$-th strategy $ \bx^{(i)} \gets [e_i ]_{i\in [m] }$; $\cY$, a list of empirical estimation of follower strategies w.r.t. $\bx^{(i)}$; set $i\gets 0$.
            \FOR{$t = 0,1, \dots, T$} 
            \STATE Use leader strategy $\bx^{(i)}$ from $\cX$ to query for follower response $j \gets \cO(\bx^{(i)})$.
            \STATE Update empirical estimation $\by^{(i)} $ of the follower's QR strategy to $\bx^{(i)}$.
            \IF{Probability mass of $\by^{(i)}$ concentrates on a single action}
            \STATE Sample a random perturbation $\tilde{\bx}$ from simplex $\Delta_m$.
            \STATE Replace $\bx^{(i)}$ in list $\cX$ by the new strategy $\bx^{(i)} \gets \frac{1}{2} \tilde{\bx} + \frac{1}{2} e_i$. 
            \STATE Reset the empirical estimator $\by^{(i)}$ in $\cY$.
            \ENDIF
            \STATE Update $i\gets (i+1) \mod m$.
        \ENDFOR
        \STATE Solve the optimization program $\cQ$ for the best game parameters using $\cX, \cY$. 
    \end{algorithmic}
\end{algorithm}
\vspace{-2mm}

\section{Experiment}
In this section, we seek to further understand the empirical implications of our learnability results. A major challenge when evaluating the learning performance is that the measures rely on the underlying ground truth utility. While there are several real world data collected in particular to understand the human behaviors and QR model \cite{mckelvey1995quantal, pita2010robust, yang2011improving, nguyen2013analyzing}, they are sensitive, proprietary datasets in security domains that we are unfortunately unable to access. Moreover, these offline dataset only offer limited number of offline samples that can hardly be used in our active learning setup. Therefore, our experiments have to rely on synthesized game instances, from which we can construct oracles to respond to the active learning queries and accurately evaluate for the learning performance. As motivated in the previous section, we will use the logit distance in Definition \ref{def:logit-distance} to empirically measure the quality of recovered follower utilities.\footnote{Except the varying parameters, we control the parameters as $m=n=10, \alpha=0.2, \lambda=8$ by default, and plot their average performance across $5$ different randomly generated instances with the standard deviation illustrated in the error bars or the lightly shaded regions.}
We start by investigating the empirical performance of \vertex{} in games synthesized using several sets of different parameters below.  
\begin{itemize}[leftmargin=*]
    \item \textbf{The number of leader and follower actions $m, n$:} We compare the learning performance in game of varying sizes, while fixing the number of query $T=10^7$. In the left plot of Figure \ref{fig:control-exp}, the first trend to notice is that the error grows almost linear to $m$, exactly as Theorem \ref{thm:realized-action} predicts. Meanwhile, the error also grows as $n$ increases, as the error bound depends on $1/\rho \geq n$. In Appendix, we shows that $1/\rho$ in average among those randomized generated game instances grows linearly with $n$, which justifies the almost linear relation between the logit distance and $n$.
    
    \item \textbf{The level of bounded rationality $\lambda$:} We consider different $\lambda$ ranging from $0.5$ to $16$ estimated in prior human behavior experiments \cite{mckelvey1995quantal, pita2010robust, yang2011improving}. In the middle plot of Figure \ref{fig:control-exp}, we display the convergence trend of logit distance in the number of queries.
    The \vertex{} algorithm shows consistently good performance among these different $\lambda$. On one hand, in games with the smaller $\lambda$, the error tends to converge slower, as bounded by the $ \frac{1}{\lambda\sqrt{t}} $ convergence rate implied by Theorem \ref{thm:realized-action}. On the other hand, the variance of error increases especially in the initial half of the timeline in games with larger $\lambda$. This is explained by the fact that sample complexity of learning distribution up to $(1-\epsilon)$-multiplicative factor increases as the distribution concentrates when  $\lambda$ increase. 
    \item \textbf{The payoff margin $\alpha$:} We generate the follower's utility matrix, $V = \alpha I + (1-\alpha) \Xi$, as a convex combination of diagonal matrix $I \in \RR^{m\times n}$ and Gaussian random noise $\Xi$ normalized to $[0,1]^{m\times n}$ such that the larger $\alpha$, the follower are likely to have higher margin for his best response against each of the leader's action. In the right plot of Figure \ref{fig:control-exp}, we can see a consistent trend of improving estimation of the follower's utility as query number increases across different level of $\alpha$. Interestingly, as the utility matrix becomes closer to the simple diagonal matrix, and the follower easily becomes less irrational, the convergence rate slows down; this again suggests our message on the \emph{blessing of bounded rationality} that provides the stochasticity in follower's responses used as our additional information source. 
\end{itemize}
\vspace{-1mm}

\begin{figure}[t]
    \centering
    \includegraphics[width=0.32\linewidth]{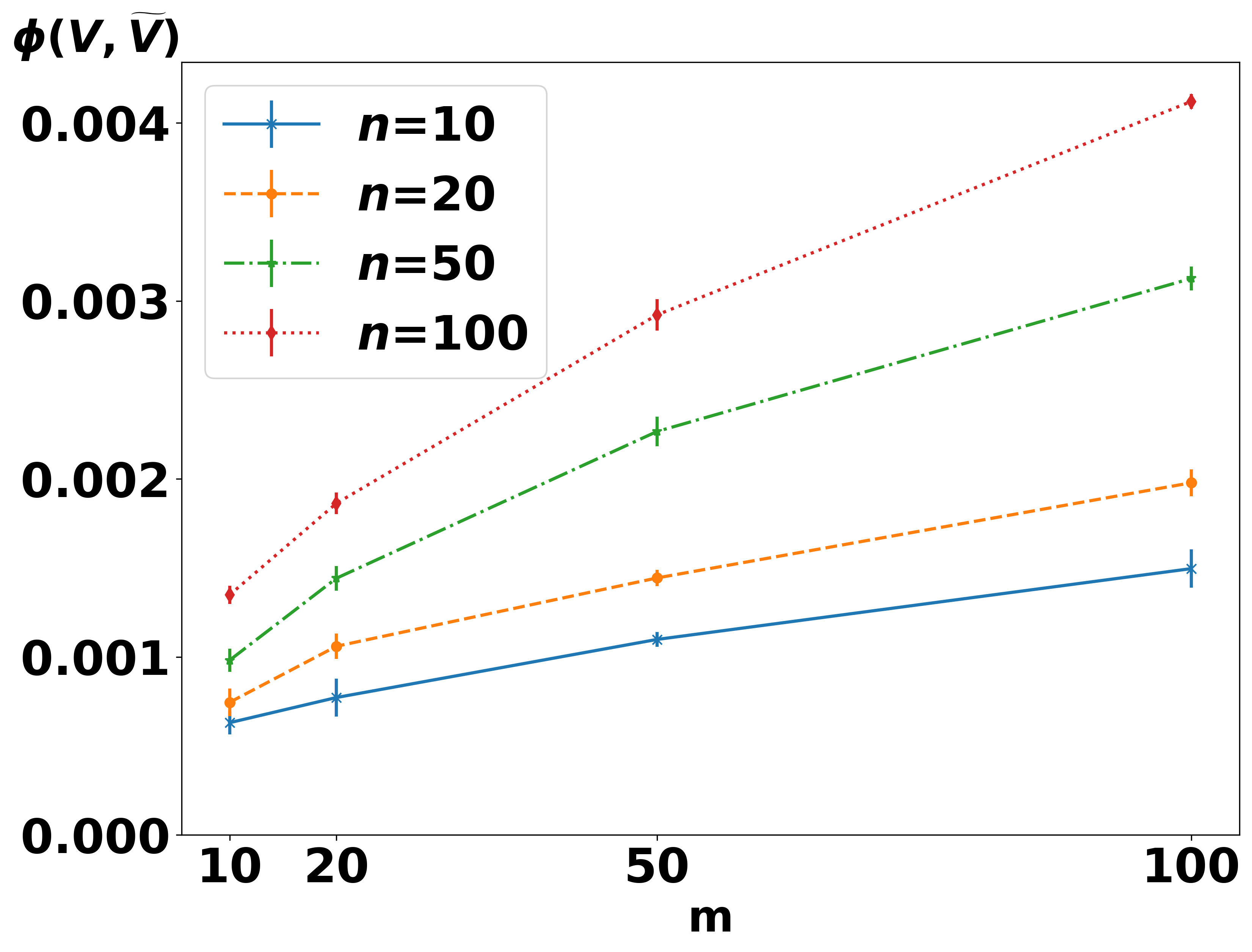}
    \includegraphics[width=0.32\linewidth]{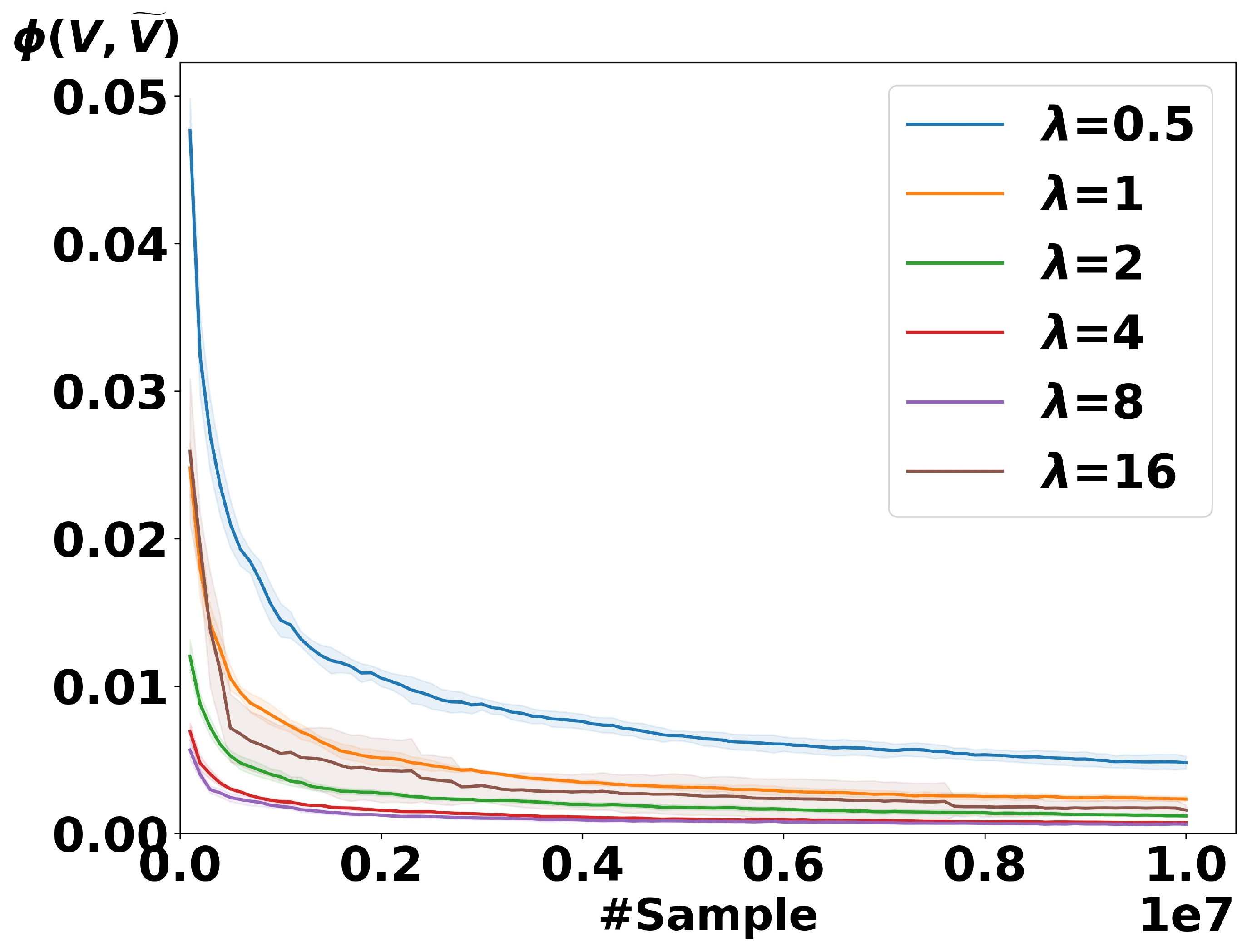}
    \includegraphics[width=0.32\linewidth]{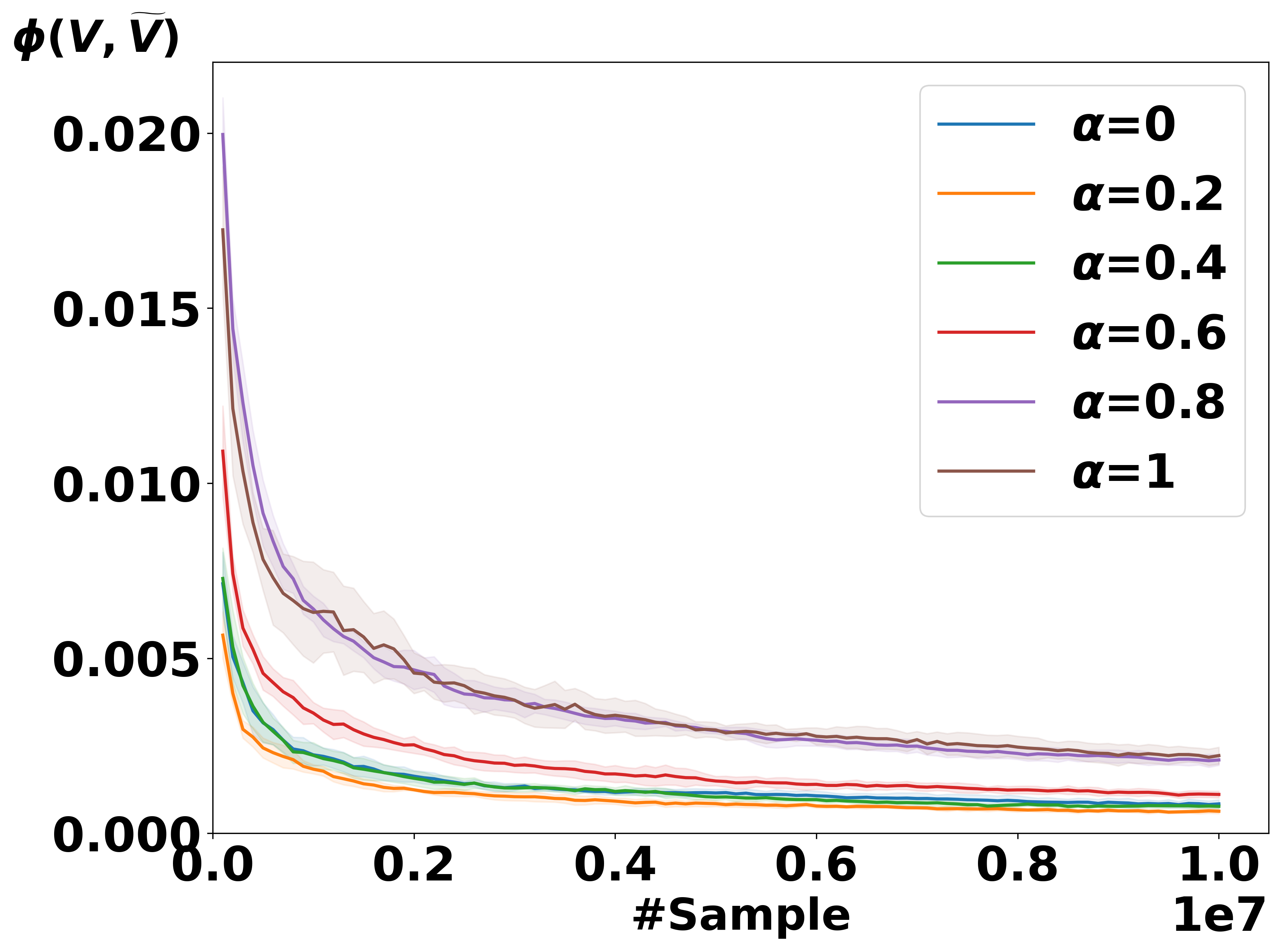}
    \caption{Recovering payoffs under varying parameters $m\times n$ (left), $\lambda$ (middle), $\alpha$ (right)}
    \label{fig:control-exp}
    \vspace{-2mm}
\end{figure}

We also compare the performance of \vertex{} and its variants introduced in Section \ref{sec:algos}, and the results closely match with our theoretical insights. In the left plot of Figure \ref{fig:offline}, we compare \vertex{} using only $10$ leader strategies with the standard offline learning setup using $10^2, 10^3$ or $10^4$ leader strategies with less samples in average and less accurate estimation of follower response for each leader strategy.  We can see that the \vertex{} significantly outperforms these offline learning setups, especially when $\lambda$ is smaller such that the response of follower tends to be more irrational and thus ``noisy''. In the middle plot of Figure \ref{fig:offline}, we study the learning performance of \vertex{} in various security games with or without using the optimization program specialized for the game structure (in dotted or straight lines). The result suggests that the structure insights can be used for fast recovery of follower utility. In the right plot of Figure \ref{fig:offline}, we found that \vertexplus{}, with the principle of exploration and exploitation, are able to improve the learning performance in the case when the follower appears to be more rational. However, its performance also degrades as $\lambda$ further increases and the problem becomes almost the perfect rationality setting that are proved to be statistically hard to learn~\cite{letchford2009learning, peng2019learning}. In the limit of space, please check out Appendix \ref{append:exp} for more descriptions and analysis of our experiments. 

\begin{figure}[t]
    \centering
    \includegraphics[width=0.32\linewidth]{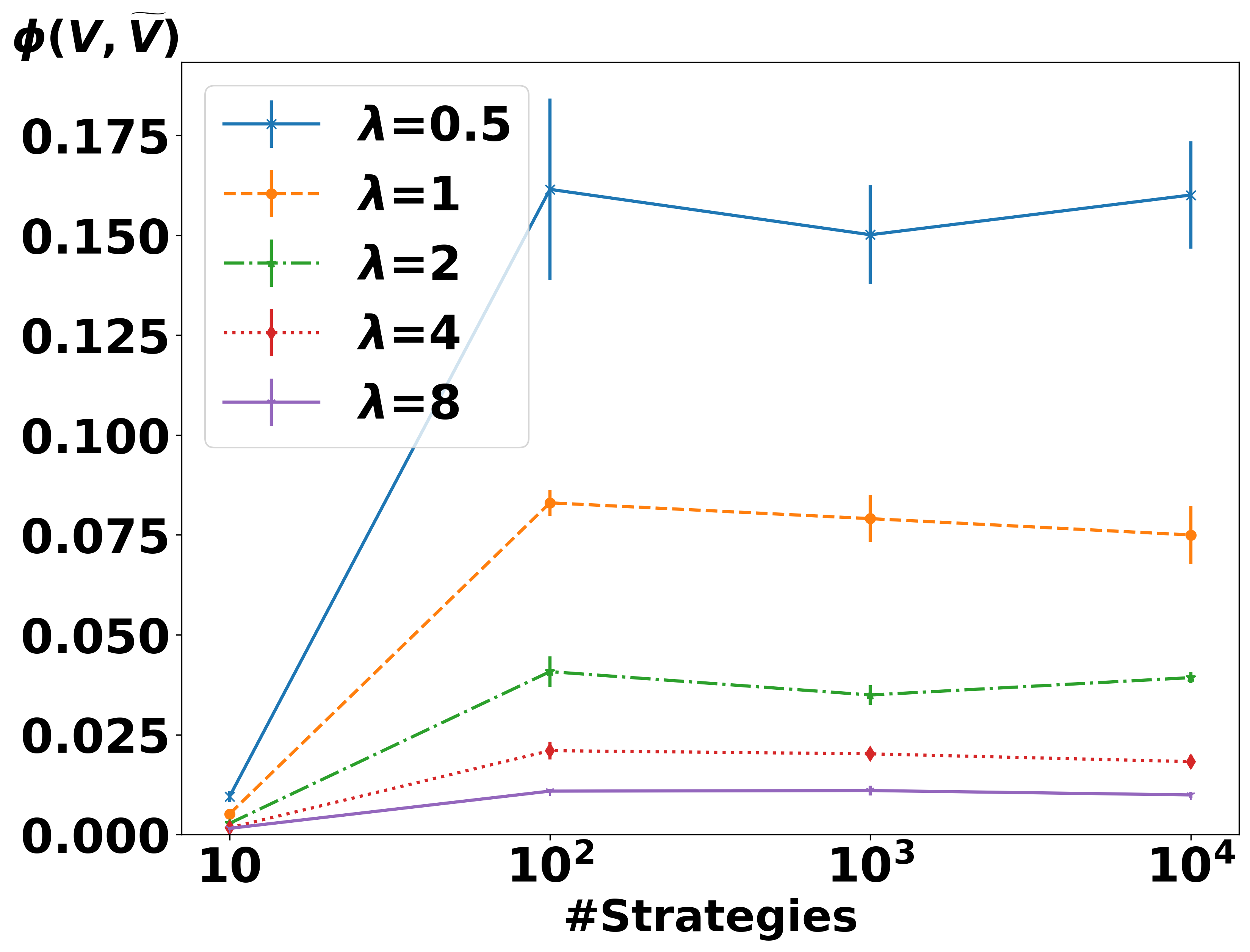}
    \includegraphics[width=0.32\linewidth]{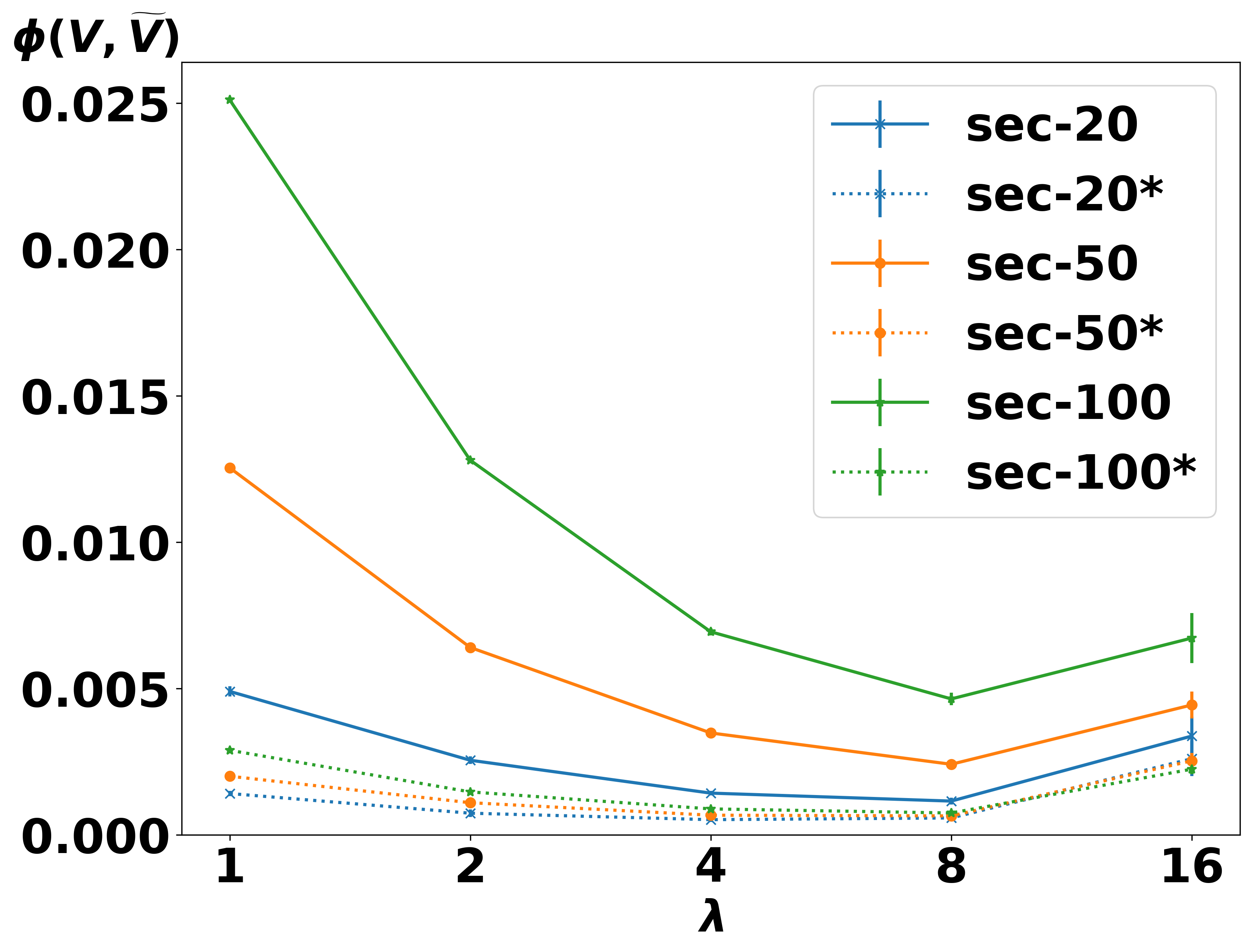}
    \includegraphics[width=0.32\linewidth]{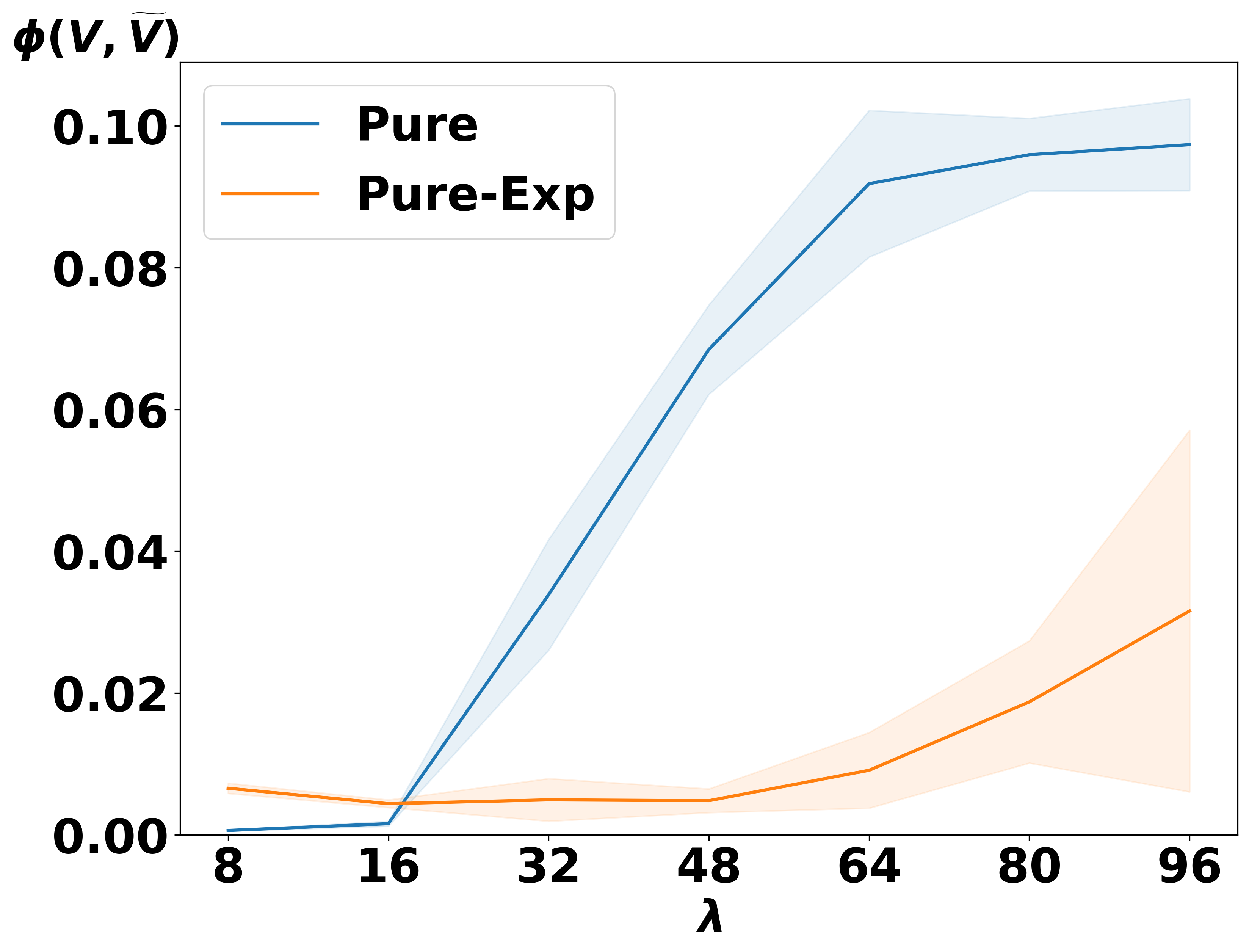}
    \caption{Comparision of \vertex{} v.s. offline data (left), \vertex{} with or without structure insights (middle), \vertex{}  v.s. \vertexplus{} (right).}
    \label{fig:offline}
    \vspace{-2mm}
\end{figure}

\section{Conclusion}
Two common assumptions of a typical game theory problem are: (1) the agents know the game parameters; (2) the agents are perfectly rational. Though these assumptions have enabled elegant mathematical models and fundamental theoretical insights, they could be limiting in some real-world scenarios. Our paper tackles the particular problem in sequential game-theoretical interactions without these two common assumptions. While similar inverse game theory problems under perfect rationality are shown to be statistically or computationally intractable, we made an intriguing finding in which relaxing us from these idealistic settings in turns lead us to a provably efficient learning guarantee. Therefore, we proposed the learning framework of \vertex{} intended for fewer usage restrictions in real-world applications.  In future work, we wish to extend our analysis and insights to more general game settings and other models of bounded rationality.

\section*{Acknowledgement}
We thank all the anonymous reviewers for their helpful comments. 
Co-author Fei Fang was supported in part by NSF CAREER grant IIS-2046640. Weiran Shen gratefully acknowledges financial support from the National Natural Science Foundation of China (No. 62106273), the Fundamental Research Funds for the Central Universities, and the Research Funds of Renmin University of China.

\bibliographystyle{plainnat}
\bibliography{refer}

\newpage

\appendix

\section{Proof of Observation \ref{prop:exp3-shift}}
\label{append:prop-proof-1}
\begin{proof}
First, pick any leader strategy $\bx$, we show with any utility matrix $V$, the follower has the same action probability under quantal response model for any of its row-shifted utility matrices $\tilde{V} \coloneqq V + \bc \otimes 1_{n} $\footnote{$1_n = (1, 1, \cdots, 1)$ denotes the 1-vector of size $n$}. To verify this claim, we pick any of follower's action $j\in [n]$,
\begin{equation*}
p_j = \frac{ \exp (\lambda \bx^\top \tilde{V}_j)  }{ \sum_{k\in [n]} \exp (\lambda  \bx^\top \tilde{V}_k)  } 
 =  \frac{ \exp (\lambda \bx^\top [V_j + \bc ])  }{ \sum_{k\in [n]} \exp (\lambda  \bx^\top [V_k + \bc ]) }     
 =  \frac{ \exp (\bx^\top V_j) \cancel{\exp (\lambda  \bx^\top \bc )} }{ \sum_{k\in [n]} \exp (\lambda  \bx^\top V_k) \cancel{\exp (\lambda  \bx^\top \bc)} },
\end{equation*}
where the first and second equality is by definition; the last equality is by separating and canceling out the same terms.

Second, we show the follower's best response remains the same for any of its row-shifted utility matrices $\tilde{V} \coloneqq V + \bc \otimes 1_{n} $. Specifically, for any $\bx$, for any $V$ and $\tilde{V} \coloneqq V + \bc \otimes 1_{n} $, we have, $\forall j,k\in [n]$,
$
\bx^\top \tilde{V}_j \geq \bx^\top \tilde{V}_k \iff \bx^\top [V_j + \bc ] \geq \bx^\top [V_k + \bc ] \iff \bx^\top V_j \geq \bx^\top V_k  
$

Finally,  since the follower's quantal response (resp. best response) policy to leader's committed strategy remains the same under any of its row-shifted utility matrices $\tilde{V} \coloneqq V + \bc \otimes 1_{n} $, pick any strategy $\bx$, the leader utility in face of quantal response (resp. best response) must remain the same. This means the leader's equilibrium strategy in SSE or QSE also remains the same for any follower utility $\tilde{V} \coloneqq V + \bc \otimes 1_{n} $.
\end{proof}

\section{Proof of Proposition \ref{prop:mixed-strategy}}
\label{append:prop-proof-2}
\begin{proof}
Pick $m$ linearly independent basis vectors for each $\bx(t)$ in $m$ rounds. To recover the follower's utility, we formulate an optimization program by minimizing the cross entropy loss $L(P;Q) = - P \log Q$, where $P$ is the observed strategy ${\by}(t) $ and $Q$ is the predicted strategy with each entry $p_j(t) = \frac{\exp(\lambda \bx(t)^\top V_j)}{\sum_{k\in [n]}\exp(\lambda \bx(t)^\top V_k)}$. 
\begin{lp}\label{al:solver-2}
\mini{ \sum_{t\in [m]} \left[ \log \sum_{j\in [n]} \exp z_j(t) - {\by}(t)\cdot \bz(t) \right] }
\qcon{ \bz(t) = \lambda \bx(t)^\top \tilde{V} }{  t \in [m]} 
\end{lp}

We now argue that the above optimization program can perfectly recover the follower's utility matrix $\tilde{V}$. %
Here in this program $\tilde{V}$ is the only unknown variable, and $\bz(t)$ serves as a proxy variable of $\tilde{V}$. And we start by determining $\bz(t)$. Observe that the objective of the optimization program is a log-sum-exp function w.r.t. variables $\{\bz(t) \}_{t\in [m]} $, which is convex. We can compute its derivative is zero at $\{  z_i(t) = \ln \hat{y}_i(t) + c_t | \forall i\in [n], \forall c_t \in \RR, \forall t \in [m] \}$, which forms the set of minimizers of this function.

Now we fix any $c_t\in \RR$ for each $t\in [m]$, and denote the vector $\bc \coloneqq [c_t ]_{t\in [m]}$.
Let $X \coloneqq [\bx(t) ]_{t\in [m]}$, ${Y} \coloneqq [{\by}(t) ]_{t\in [m]}$.
Then, replacing each $z_i(t)$ by $\ln \hat{y_i}(t) + c_t$ in the optimization constraint, we can formulate the linear equation $ \lambda X^T \tilde{V} = \ln {Y} + \bc \otimes 1_{n} $, denoted as $\cL(c)$. Since $X^T$ is a full rank matrix in $\RR^{m\times m}$, $\cL(\bc)$ has a unique solution for $\tilde{V} = \lambda^{-1} (X^{-1})^\top [\ln {Y}  + \bc \otimes 1_{n} ]$.

Let $\cV \coloneqq \{\lambda^{-1} (X^{-1})^\top [\ln {Y}  + \bc \otimes 1_{n} ] | \forall \bc\in \RR^m \}$ denote the solutions to $\cL(\bc)$ for all $\bc$. Let $V^*$ be the ground truth follower utility.
Following from the Observation \ref{prop:exp3-shift}, if $\cV \subseteq \{V^* + \bc' \otimes 1_{n} | \forall \bc'\in \RR^{m} \}$, then the solution to $\cL(\bc)$ of arbitrary $\bc$ recovers the follower's utility to the level that the optimal leader strategy can be exactly determined.

To see this, let $c^*$ be the vector such that the unique solution to $\cL(\bc^*)$ is the ground-truth follower's utility $V^*$. Since $V^* \in \cV$, such $\bc^*$ exists. Now for any $\bc$, we can derive that
\begin{align*}
\tilde{V}  &= \lambda^{-1} (X^{-1})^\top [\ln Y  + \bc \otimes 1_{n} ] \\
& = \lambda^{-1} (X^{-1})^\top [\ln Y  + \bc^* \otimes 1_{n} ] +  \lambda^{-1} (X^{-1})^\top [\bc^*-\bc]\otimes 1_{n} \\
& = V^* + \lambda^{-1} (X^{-1})^\top [\bc^*-\bc] \otimes 1_{n}
\end{align*}
where $\lambda^{-1} (X^{-1})^\top [\bc^*-\bc]$ forms a vector in $\RR^m$. Hence, there exists some $\bc'\in \RR^m$, $\tilde{V} = V^* + \bc' \otimes 1_{n}$. This proves that any minimizer of the above convex program, i.e.,  the solution to $\cL(\bc)$ for any $\bc$, would allow us to solve the optimal leader strategy exactly.
\end{proof}

\newpage

\section{Proofs of Theorem \ref{thm:shift}}
\label{append:shift-thm}

\begin{thm}%
Given a follower utility $\tilde{V}$ with inducibility gap $\sigma(\tilde{V}) > 5\epsilon$, we can construct an $O(\epsilon/\sigma(\tilde{V}))$-optimal leader strategy for any game $\cG(U,V)$ with $\Phi(\tilde{V}, V) \leq \frac{\epsilon}{mn}$.
\end{thm}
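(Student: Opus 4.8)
The plan is to prove the statement by an explicit \emph{robustification}: I take the optimal leader strategy of the estimated game $\cG(U,\tilde V)$, blend it with an action-inducing strategy, and argue the resulting strategy stays near-optimal in the true game $\cG(U,V)$, using the inducibility gap to control how far the follower's incentives can drift between the two games. First I would convert the logit-distance hypothesis into a uniform incentive bound. By Observation \ref{prop:exp3-shift} I may replace $\tilde V$ by the row-wise shift attaining the minimum in $\Phi$, so that $\sum_{i\in[m]}\|V_i-\tilde V_i\|_1\le\epsilon$ while leaving every SSE/QSE unchanged. Then for every $\bx\in\Delta_m$ and every pair $j,j'$, a triangle-inequality estimate gives the key bound $|\bx^\top(V-\tilde V)(e_j-e_{j'})|\le 2\epsilon$, since each column of $V-\tilde V$ carries $\ell_1$-mass at most $\epsilon$ and $x_i\le 1$. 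This bound immediately yields the \emph{bounded-inducibility-gap lemma}: every nested $\min/\max/\min$ defining $\sigma$ is perturbed by at most $2\epsilon$, so $\sigma(V)\ge\sigma(\tilde V)-2\epsilon>3\epsilon$ under the hypothesis $\sigma(\tilde V)>5\epsilon$, which is what later lets me eliminate the unknown $\sigma(V)$ in favor of $\sigma(\tilde V)$.

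Next I construct the strategy. Let $(\tilde{\bx}^*,\tilde{j}^*)$ be the SSE of $\cG(U,\tilde V)$ and let $\bx^{\tilde{j}^*}$ induce $\tilde{j}^*$ by margin $\sigma(V)$ in the true game (it exists by the definition of $\sigma(V)$). Put $\gamma=\frac{3\epsilon}{\sigma(V)}<1$ and $\bx=(1-\gamma)\tilde{\bx}^*+\gamma\,\bx^{\tilde{j}^*}$. The \emph{same-response lemma} then follows by linearity: for any $j'\neq\tilde{j}^*$, the $\tilde{\bx}^*$-term contributes at least $-2\epsilon$ to $\bx^\top V(e_{\tilde{j}^*}-e_{j'})$ (because $\tilde{j}^*$ best-responds to $\tilde{\bx}^*$ in $\tilde V$, and I transfer that inequality to $V$ via the key estimate), while the $\bx^{\tilde{j}^*}$-term contributes at least $\gamma\sigma(V)=3\epsilon$; the total is at least $\epsilon>0$, so $\tilde{j}^*$ is the \emph{strict} best response to $\bx$ in $\cG(U,V)$. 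Strictness removes any SSE tie-breaking ambiguity and fixes the leader's realized value at $\bx^\top U e_{\tilde{j}^*}$. Since $U\in[0,1]^{m\times n}$, discarding the $\gamma$-weighted term costs at most $\gamma$, giving $\bx^\top U e_{\tilde{j}^*}\ge(1-\gamma)\,\mathrm{OPT}(\tilde V)$, where $\mathrm{OPT}(\cdot)$ denotes the leader's SSE value.

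It remains to bound $\mathrm{OPT}(V)-\mathrm{OPT}(\tilde V)$, the \emph{bounded-SSE-difference lemma}, which I would prove by the symmetric transport argument: take the true-game SSE $(\bx^*,j^*)$, robustify it into $\cG(U,\tilde V)$ by mixing with the $\tilde V$-inducer of $j^*$ at rate $\frac{3\epsilon}{\sigma(\tilde V)}$, and conclude $\mathrm{OPT}(\tilde V)\ge\bigl(1-\tfrac{3\epsilon}{\sigma(\tilde V)}\bigr)\mathrm{OPT}(V)$, hence $\mathrm{OPT}(V)-\mathrm{OPT}(\tilde V)\le\frac{3\epsilon}{\sigma(\tilde V)}$. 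Chaining the two displays, $\mathrm{OPT}(V)-\bx^\top U e_{\tilde{j}^*}\le\frac{3\epsilon}{\sigma(V)}+\frac{3\epsilon}{\sigma(\tilde V)}\le\frac{6\epsilon}{\sigma(\tilde V)-2\epsilon}$, where the final step substitutes $\sigma(V)\ge\sigma(\tilde V)-2\epsilon$; since $\sigma(\tilde V)>5\epsilon$ forces $\sigma(\tilde V)-2\epsilon>\tfrac35\sigma(\tilde V)$, this is $O(\epsilon/\sigma(\tilde V))$, as claimed.

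The step I expect to be the main obstacle is the bounded-SSE-difference lemma, precisely because the leader's SSE value is a \emph{discontinuous} function of the follower payoff matrix: a naive ``plug in the perturbed matrix'' bound fails, as an arbitrarily small payoff change can flip the follower's best response and make the leader's value jump. The inducibility gap together with the robust-mixing construction is exactly the device that tames this discontinuity, and the delicate part is the bookkeeping that collapses the two distinct mixing rates $\frac{3\epsilon}{\sigma(V)}$ and $\frac{3\epsilon}{\sigma(\tilde V)}$ into the single clean bound $\frac{6\epsilon}{\sigma(\tilde V)-2\epsilon}$ while verifying $\gamma<1$ throughout --- which is where the hypothesis $\sigma(\tilde V)>5\epsilon$ is consumed.
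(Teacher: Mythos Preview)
Your proposal is correct and follows essentially the same route as the paper: the same four ingredients (the $2\epsilon$ incentive-perturbation bound, the inducibility-gap transfer $\sigma(V)\ge\sigma(\tilde V)-2\epsilon$, the same-response lemma for the robustified mixture, and the bounded-SSE-difference lemma), assembled into the same final estimate $\frac{6\epsilon}{\sigma(\tilde V)-2\epsilon}$. The only cosmetic difference is that for the SSE-difference step you mix with the $\tilde V$-inducer at rate $3\epsilon/\sigma(\tilde V)$ whereas the paper mixes with the $V$-inducer at rate $3\epsilon/\sigma(V)$; both variants are valid and collapse to the same bound after invoking $\sigma(V)\ge\sigma(\tilde V)-2\epsilon$.
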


\begin{proof}
We prove through an explicit construction. Specifically, given the estimation of the follower's utility $\tilde{V}$, we construct a $\epsilon$-robust strategy $\bx = (1-\frac{3\epsilon}{\sigma({V})}){\tilde{\bx}}^* + \frac{3\epsilon}{\sigma({V})} {\bx}^{\tilde{j}^*}$ based on the SSE $(\tilde{\bx}^*, \tilde{j}^*)$ of $\cG(U, \tilde{V})$ and the strategy ${\bx}^{j^*}$ such that $({\bx}^{\tilde{j}^*})^\top {V} e_j \geq ({\bx}^{\tilde{j}^*})^\top {V} e_{j'} + \sigma(V), \forall j' \neq j$. 
We show this strategy is guaranteed to be an $(\frac{6\epsilon}{\sigma(\tilde{V}) - 2\epsilon})$-SSE of the Stackelberg game $\cG(U, V)$. 
The proof then relies on two key observations stated in Lemma \ref{lm:same-response} and \ref{lm:bounded-SSE-diff}: First, given that $\Phi(\tilde{V}, V) \leq \frac{\epsilon}{mn}$ and $\sigma(V) > 3\epsilon$, the best response of an robust strategy $\bx$ in game $\cG(U, V)$ remain the same from that in a game $\cG(U, \tilde{V})$, and so is the leader utility. This means $\bx$ gets at least $(1-\frac{3\epsilon}{\sigma(V)})$ portion of SSE utility in $\cG(U, \tilde{V})$. Second, the difference between the SSE utility in $\cG(U, {V})$ and $\cG(U, \tilde{V})$ are bounded by $\frac{3\epsilon}{\delta(V)} $.  Meanwhile, despite $V$ is unknown to us, Lemma \ref{lm:bounded-inducibility-gap} shows that we can bound $\sigma(V) \geq \sigma(\tilde{V}) - 2\epsilon$, so we can use $\sigma(\tilde{V}) - 2\epsilon $ to substitute $\sigma(V)$. And this requires $\sigma(\tilde{V}) \geq 5\epsilon $.

Let $U^*$ and $\tilde{U}^*$ be the SSE utility of $\cG(U, V)$ and $\cG(U, \tilde{V})$ respectively. Let $({\bx^*}, {j^*})$ be the SSE of $\cG(U, V)$, and $(\tilde{\bx}^*, \tilde{j}^*)$ be the SSE of $\cG(U, \tilde{V})$. The leader utility strategy $\bx$ in $\cG(U, V)$ can be bounded as
\begin{align*}
 \bx^\top U e_{j^*} & = (1-\frac{3\epsilon}{\sigma(\tilde{V}) - 2\epsilon })(\tilde{\bx}^* )^\top U e_{\tilde{j}^*} + \frac{3\epsilon}{\sigma(\tilde{V}) - 2\epsilon} (\bx^{\tilde{j}^*})^\top U e_{\tilde{j}^*}  \\
  & \geq (1-\frac{3\epsilon}{\sigma(\tilde{V}) - 2\epsilon})\tilde{U}^*  \\
 & \geq \tilde{U}^* - \frac{3\epsilon}{\sigma(\tilde{V}) - 2\epsilon} \\
 & \geq U^* - \frac{3\epsilon}{\sigma(\tilde{V}) - 2\epsilon}  - \frac{3\epsilon}{\sigma(V)} \\
 & \geq U^* - \frac{6\epsilon}{\sigma(\tilde{V}) - 2\epsilon} \\
 & = U^* - O(\frac{\epsilon}{\sigma(\tilde{V})}) \\
\end{align*}
where the first equality is by the construction of $\bx$; 
The first inequality uses the fact that $({\bx}^{\tilde{j}^*})^\top U e_{\tilde{j}^*} \geq 0$ and the definition of $\tilde{U}^*$. The second inequality uses the fact that $\tilde{U}^* \leq 1$. The third inequality follows from Lemma \ref{lm:bounded-SSE-diff}. The last inequality uses $0 < \frac{1}{\sigma(V)} \leq \frac{1}{\sigma(\tilde{V}) - 2\epsilon}$.

\end{proof}

\begin{lm}[Invariance of Best Response under $\epsilon$-Robust Strategy]
\label{lm:same-response}
Let $j$ be the follower's best response against the leader strategy $\bx$ in $\cG(U, V)$ and there exists $\bx^j$ such that $\bx^jV e_j \geq \bx^jV e_{j'} + \sigma, \forall j' \neq j$. 
If $\sigma \geq 3\epsilon$, then follower's best response against the $\epsilon$-robust strategy $\bar{\bx} = (1-\frac{3 \epsilon}{\sigma}) \bx + \frac{3 \epsilon}{\sigma} \bx^j$ remains $j$, and the leader's utility of strategy $\bx$ remains the same in any game $\cG(U, \tilde{V})$ with $\Phi(\tilde{V}, V) \leq \frac{\epsilon}{mn}$. 
\end{lm}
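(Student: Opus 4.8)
The plan is to prove the two assertions---invariance of the follower's best response and invariance of the leader's realized utility---by tracking the margin by which action $j$ beats every competitor $j'$, first as $\bx$ is replaced by its robustified version $\bar{\bx}$, and then as the payoffs move from $V$ to $\tilde{V}$. (Here I read ``the leader's utility of strategy $\bx$'' as the utility of the robust strategy $\bar{\bx}$, matching its use in the proof of Theorem~\ref{thm:shift}.)

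First I would record that $\bar{\bx}$ is a legitimate strategy: since $\sigma \geq 3\epsilon$, the weight $\tfrac{3\epsilon}{\sigma}$ lies in $[0,1]$, so $\bar{\bx} \in \Delta_m$. Then, for each $j' \neq j$, I would expand the margin by linearity,
\[
\bar{\bx}^\top V(e_j - e_{j'}) = \Big(1-\tfrac{3\epsilon}{\sigma}\Big)\,\bx^\top V(e_j - e_{j'}) + \tfrac{3\epsilon}{\sigma}\,(\bx^j)^\top V(e_j - e_{j'}).
\]
The first summand is nonnegative because $j$ is a best response to $\bx$ in $\cG(U,V)$, and the second is at least $\tfrac{3\epsilon}{\sigma}\cdot \sigma = 3\epsilon$ by the $\sigma$-domination property of $\bx^j$. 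Hence under $\bar{\bx}$ in $\cG(U,V)$, action $j$ beats every $j'$ by a margin of at least $3\epsilon$.

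The crux---and the step I expect to be the main obstacle---is transferring this margin to $\cG(U,\tilde{V})$, since $\Phi$ only pins down $\tilde{V}-V$ up to an arbitrary additive shift within each row. The key is that these are precisely the row shifts under which action-utility \emph{differences} are invariant (Observation~\ref{prop:exp3-shift}). Concretely, letting $z_i$ be the optimal row-$i$ shift in Definition~\ref{def:logit-distance} and setting $\delta_{i,k} = \tilde{V}_{i,k} - V_{i,k} - z_i$, the term $z_i$ cancels in each difference, giving
\[
\bar{\bx}^\top(\tilde{V}-V)(e_j - e_{j'}) = \sum_{i\in[m]} \bar{x}_i\,(\delta_{i,j} - \delta_{i,j'}).
\]
I would then bound the right-hand side, using $\bar{x}_i \leq 1$ and $j \neq j'$, by $\sum_i(|\delta_{i,j}| + |\delta_{i,j'}|) \leq \sum_i\sum_k |\delta_{i,k}| = mn\,\Phi(\tilde{V},V) \leq \epsilon$. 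So the margin shrinks by at most $\epsilon$, and combining with the $3\epsilon$ margin in $\cG(U,V)$ yields $\bar{\bx}^\top \tilde{V}(e_j - e_{j'}) \geq 2\epsilon > 0$ for all $j' \neq j$. Thus $j$ is the unique best response to $\bar{\bx}$ in $\cG(U,\tilde{V})$ as well.

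Finally, the leader-utility claim would follow at once: since $j$ is the follower's best response to $\bar{\bx}$ in both $\cG(U,V)$ and $\cG(U,\tilde{V})$ while the leader matrix $U$ is shared, the leader's realized utility equals $\bar{\bx}^\top U e_j$ in either game and is therefore unchanged. I would note that the strict $2\epsilon$ margin precludes ties, so the best response is unambiguous and no tie-breaking convention enters.
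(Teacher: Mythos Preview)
Your proposal is correct and follows essentially the same approach as the paper: decompose $\bar{\bx}^\top \tilde{V}(e_j-e_{j'})$ into the $V$-margin and the perturbation $\bar{\bx}^\top(\tilde{V}-V)(e_j-e_{j'})$, lower-bound the former by $3\epsilon$ via the convex-combination expansion (using that $j$ best-responds to $\bx$ and that $\bx^j$ has margin $\sigma$), and control the latter via the logit-distance hypothesis. The only notable difference is cosmetic: the paper packages the perturbation bound as a separate lemma (its Lemma~\ref{lm:product-infty-norm-bound}, yielding $2\epsilon$ and hence a final margin of $\epsilon$), whereas you carry out the same cancellation-of-row-shifts argument inline and obtain the slightly sharper bound $\epsilon$ (final margin $2\epsilon$); you also make explicit that $\bar{\bx}\in\Delta_m$ and that the leader-utility invariance follows immediately once the best response is pinned down, points the paper leaves implicit.
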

\begin{proof}
To show $j$ is the follower's best response to the leader strategy $\bar{\bx}$ in $\cG(U, \tilde{V})$, 
we directly show through the definition, $\bar{\bx}^\top \tilde{V} (e_{j} - e_{j'}) \geq 0, \forall j' \neq j$. Pick any $j' \neq j$,
\begin{align*}
\bar{\bx}^\top \tilde{V} (e_{j} - e_{j'}) = & \bar{\bx}^\top {V} (e_{j} - e_{j'}) + \bar{\bx}^\top (\tilde{V}- V) (e_j -  e_j') \\
\geq  &  \bar{\bx}^\top {V} (e_{j} - e_{j'})  - 2\epsilon \\
=  & (1-\frac{3 \epsilon}{\sigma }) {\bx}^\top {V} (e_{j} - e_{j'})  +  \frac{3 \epsilon}{\sigma } (\bx^j)^\top {V} (e_{j} - e_{j'})  - 2\epsilon \\
\geq  & \frac{3 \epsilon}{\sigma } (\bx^j)^\top {V} (e_{j} - e_{j'}) - 2\epsilon \\
\geq  & \frac{3 \epsilon}{\sigma} \sigma - 2\epsilon = \epsilon
\end{align*}
The first inequality is by Lemma \ref{lm:product-infty-norm-bound}. The equality is by construction of $\bar{\bx}$ and linearity of $D$. The second inequality is by the fact that ${\bx}^\top {V} (e_{j} - e_{j'}) \geq 0$ since $j$ is the best response to $\bx$ under follower utility $V$. The last inequality is by using the fact that ${\bx^j}^\top {V} (e_{j} - e_{j'}) \geq \sigma$.
\end{proof}

\begin{lm}[Bounded SSE Utility Difference] \label{lm:bounded-SSE-diff}
Let $U^*$ and $\tilde{U}^*$ be the SSE utility of $\cG(U, V)$ and $\cG(U, \tilde{V})$ respectively.
If $\Phi(\tilde{V}, V) \leq \frac{\epsilon}{mn}$, then we have $\tilde{U}^* \geq U^* -  \frac{3\epsilon}{\sigma(V)}  $.
\end{lm}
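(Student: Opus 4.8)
The plan is to transfer the optimal leader strategy from the true game $\cG(U,V)$ into the perturbed game $\cG(U,\tilde V)$ while losing at most $\frac{3\epsilon}{\sigma(V)}$ in utility, and then invoke the optimality of $\tilde U^*$. First I would dispose of the degenerate regime: if $\sigma(V) < 3\epsilon$, then $\frac{3\epsilon}{\sigma(V)} > 1 \ge U^*$ (leader payoffs lie in $[0,1]$, so $U^*\le 1$), hence $U^* - \frac{3\epsilon}{\sigma(V)} < 0 \le \tilde U^*$ and the claim holds trivially. This lets me assume $\sigma(V)\ge 3\epsilon$, which is exactly the hypothesis needed to apply Lemma~\ref{lm:same-response}.

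Next, let $(\bx^*, j^*)$ be the SSE of $\cG(U,V)$ with leader utility $U^* = (\bx^*)^\top U e_{j^*}$. By Definition~\ref{def:inducibility} of the inducibility gap, there is a strategy $\bx^{j^*}$ with $(\bx^{j^*})^\top V (e_{j^*} - e_{j'}) \ge \sigma(V)$ for all $j'\neq j^*$. Following the construction in Lemma~\ref{lm:same-response}, I form the $\epsilon$-robust strategy $\bar{\bx} = (1-\tfrac{3\epsilon}{\sigma(V)})\bx^* + \tfrac{3\epsilon}{\sigma(V)}\bx^{j^*}$. Since $j^*$ is the best response to $\bx^*$ under $V$ and $\sigma(V)\ge 3\epsilon$, Lemma~\ref{lm:same-response} (with $\Phi(\tilde V,V)\le\frac{\epsilon}{mn}$) guarantees that $j^*$ remains the strict best response to $\bar{\bx}$ in the perturbed game $\cG(U,\tilde V)$.

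It then remains to lower bound the leader utility of $\bar{\bx}$ against this response. Using the convex-combination structure of $\bar{\bx}$, the nonnegativity $(\bx^{j^*})^\top U e_{j^*}\ge 0$, and $U^*\le 1$, I would compute
\begin{align*}
\bar{\bx}^\top U e_{j^*}
&= \Big(1-\tfrac{3\epsilon}{\sigma(V)}\Big)(\bx^*)^\top U e_{j^*} + \tfrac{3\epsilon}{\sigma(V)}(\bx^{j^*})^\top U e_{j^*} \\
&\ge \Big(1-\tfrac{3\epsilon}{\sigma(V)}\Big)U^*
\ge U^* - \tfrac{3\epsilon}{\sigma(V)}.
\end{align*}
Finally, since $\tilde U^*$ is the optimal SSE leader utility in $\cG(U,\tilde V)$ and $\bar{\bx}$ is a feasible strategy whose induced best response is $j^*$, optimality yields $\tilde U^* \ge \bar{\bx}^\top U e_{j^*} \ge U^* - \tfrac{3\epsilon}{\sigma(V)}$, as claimed.

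The main obstacle is the middle step: the optimal response $j^*$ in $\cG(U,V)$ need not be a best response to $\bx^*$ in the perturbed game, so one cannot simply port $\bx^*$ over and read off its utility. This is precisely what the robustification of Lemma~\ref{lm:same-response} resolves, at the cost of mixing in $\bx^{j^*}$ with weight $\tfrac{3\epsilon}{\sigma(V)}$, which is the source of the $\tfrac{3\epsilon}{\sigma(V)}$ loss. I would be careful on two points: that $\tilde U^*$ being an SSE (with ties broken in the leader's favor) only strengthens the final inequality, and that Lemma~\ref{lm:same-response} delivers a \emph{strict} best response (margin $\epsilon>0$), so the response induced by $\bar{\bx}$ is unambiguously $j^*$.
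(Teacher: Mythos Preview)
Your proof is correct and follows essentially the same approach as the paper: construct the $\epsilon$-robust mixture $\bar{\bx} = (1-\tfrac{3\epsilon}{\sigma(V)})\bx^* + \tfrac{3\epsilon}{\sigma(V)}\bx^{j^*}$ from the true SSE, invoke Lemma~\ref{lm:same-response} so that $j^*$ remains the best response under $\tilde V$, and then compare to $\tilde U^*$ via optimality and the bounds $U^*\le 1$, $(\bx^{j^*})^\top U e_{j^*}\ge 0$. Your explicit handling of the degenerate case $\sigma(V)<3\epsilon$ is a useful addition that the paper's proof omits (it tacitly relies on the ambient assumption from Theorem~\ref{thm:shift}).
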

\begin{proof}
Let $({\bx^*}, {j^*})$ be the SSE of $\cG(U, V)$, and $(\tilde{\bx}^*, \tilde{j}^*)$ be the SSE of $\cG(U, \tilde{V})$.
We construct an $\epsilon$-robust strategy $\bar{\bx}=  (1-\frac{3 \epsilon}{\sigma}) \bx^* + \frac{3 \epsilon}{\sigma} \bx^{j^*}$ from SSE of $\cG(U, \tilde{V})$ and the strategy $\bx^{j^*} V e_{j^*} \geq \bx^{j^*}V e_{j'} + \sigma, \forall j' \neq j$.  Hence, by Lemma \ref{lm:same-response}, we know the follower's best response to $\bar{x}$ remains $j^* $ under utility $V$ or $\tilde{V}$, and so the leader utility of strategy $\bar{\bx}$ remains the same in $\cG(U, \tilde{V})$ and $\cG(U, {V})$. Then, we show the following inequalities hold:
\begin{align*}
\tilde{U}^* = (\tilde{\bx}^*)^\top U e_{\tilde{j}^*} 
& \geq \bar{\bx}^\top U e_{j^*} \\
&\geq (1-\frac{3\epsilon}{\sigma(V)}) (\bx^*)^\top U e_{j^*}  + \frac{3\epsilon}{\sigma(V)} (\bx^{j*})^\top U e_{j^*}     \\
& \geq U^*  -  \frac{3\epsilon}{\sigma(V)} 
\end{align*} 
where the first inequality is by the fact that $(\tilde{\bx}^*, \tilde{j}^*)$ is the SSE of $\cG(U, \tilde{V})$ whose leader utility must be no smaller than strategy profile $(\bar{\bx}, \bx^{j*})$.  The second inequality is by construction of $\bar{\bx}$ and linearity of $D$. The last inequality is by the fact that $(\bx^*)^\top U e_{j^*} = U^* \leq 1$ and $(\bx^{j*})^\top U e_{j^*} \geq 0$

\end{proof}

\begin{lm}[Bounded Inducibility Gap Difference]
\label{lm:bounded-inducibility-gap}
For any $\tilde{V}, V$ such that $\Phi(\tilde{V}, V) \leq \frac{\epsilon}{mn} $, we have $\sigma( \tilde{V}) \geq \sigma( V)  - 2\epsilon$.
\end{lm}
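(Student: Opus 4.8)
The plan is to combine two observations: that the inducibility gap $\sigma(\cdot)$ is invariant under exactly the row-shifts $\bc \otimes 1_n$ that the logit distance quotients out, and that a small logit distance forces every relevant bilinear form $\bx^\top V(e_j - e_{j'})$ to move by at most $2\epsilon$. Given a uniform $2\epsilon$ perturbation of these forms, the additive error then propagates unharmed through the nested $\min_j \max_{\bx} \min_{j'}$ defining $\sigma$.

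First I would record the shift-invariance of $\sigma$. For any $\bx$ and any $j \neq j'$ we have $\bx^\top V(e_j - e_{j'}) = \sum_{i \in [m]} x_i (V_{ij} - V_{ij'})$, which depends only on the \emph{within-row} differences $V_{ij} - V_{ij'}$. Adding any $\bc \otimes 1_n$ shifts $V_{ij}$ and $V_{ij'}$ by the same $c_i$, so these differences — and hence $\sigma$ — are unchanged; thus $\sigma(V + \bc \otimes 1_n) = \sigma(V)$ for all $\bc \in \RR^m$, in the same spirit as Observation \ref{prop:exp3-shift}. Using this, I would replace $\tilde{V}$ by the row-shifted matrix $\tilde{V} + \bc \otimes 1_n$ whose $i$-th row attains the minimizing scalar shift in the definition of $\Phi$. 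This leaves $\sigma(\tilde{V})$ untouched, while turning the hypothesis $\Phi(\tilde{V},V) \le \frac{\epsilon}{mn}$ into the honest entrywise bound $\sum_{i \in [m]} \lVert V_i - \tilde{V}_i \rVert_1 \le \epsilon$, and in particular $\sum_{i \in [m]} |V_{ij} - \tilde{V}_{ij}| \le \epsilon$ for every column $j$.

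Next I would bound the bilinear perturbation. Writing $W = V - \tilde{V}$ with entries $w_{ij}$, for any $\bx \in \Delta_m$ and $j \neq j'$,
\[ \Bigl| \bx^\top W(e_j - e_{j'}) \Bigr| = \Bigl| \sum_{i} x_i (w_{ij} - w_{ij'}) \Bigr| \le \sum_i \bigl( |w_{ij}| + |w_{ij'}| \bigr) \le 2\epsilon, \]
where the last step uses $x_i \in [0,1]$ together with the per-column $\ell_1$ bound from the previous paragraph. Consequently $\bx^\top \tilde{V}(e_j - e_{j'}) \ge \bx^\top V(e_j - e_{j'}) - 2\epsilon$ holds uniformly. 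I would then push this through the definition: taking $\min_{j' \neq j}$ preserves the additive $-2\epsilon$, then so does $\max_{\bx}$, then so does $\min_j$ (each $\min$/$\max$ is monotone and commutes with a constant additive shift), which yields $\sigma(\tilde{V}) \ge \sigma(V) - 2\epsilon$, as claimed.

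I expect the only genuine subtlety to be the alignment step: one must invoke the shift-invariance of $\sigma$ precisely so that the infimum over shifts baked into $\Phi$ can be converted into a usable entrywise $\ell_1$ bound without degrading $\sigma(\tilde{V})$. After the rows are aligned, the perturbation estimate and the min/max propagation are routine.
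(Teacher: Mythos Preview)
Your proposal is correct and follows essentially the same approach as the paper. The paper packages your shift-alignment and bilinear-perturbation steps into a separate helper lemma (Lemma~\ref{lm:product-infty-norm-bound}, which shows $|\bx^\top(V-\tilde{V})(\by_1-\by_2)|\le 2\epsilon$ for any $\bx\in\Delta_m$, $\by_1,\by_2\in\Delta_n$), and then invokes that lemma inside the $\min_j\max_{\bx}\min_{j'}$ definition of $\sigma$ exactly as you do; your inline argument via the per-column $\ell_1$ bound is a direct specialization of that lemma to $\by_1=e_j$, $\by_2=e_{j'}$.
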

\begin{proof}
We prove directly by definition of the inducibility gap,
\begin{align*}
\sigma( \tilde{V}) &= \min_{j\in [n]} \max_{x\in \Delta_{m}}  \min_{j' \neq j} \bx^\top \tilde{V} (e_{j} - e_{j'}) \\
& \geq \min_{j\in [n]} \max_{x\in \Delta_{m}}  \min_{j' \neq j} \bx^\top V (e_{j} - e_{j'}) - 2\epsilon  \\
& = \sigma( V) - 2\epsilon
\end{align*}
The first and last equalities are by definition, the inequality is by Lemma \ref{lm:product-infty-norm-bound}. 
\end{proof}

\begin{lm}[Bounded Utility Gap Difference] \label{lm:product-infty-norm-bound}
Given $\Phi(\tilde{V}, V) \leq \frac{\epsilon}{mn}$, for any $\bx \in \Delta_m, \by_1,\by_2\in \Delta_n$, $\abs{\bx^\top (V-\tilde{V}) (\by_1 - \by_2)  }  \leq 2\epsilon$.
\end{lm}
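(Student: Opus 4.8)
The plan is to exploit the row-shift invariance baked into the logit distance together with the fact that $\by_1-\by_2$ is orthogonal to the all-ones vector. Write $D \coloneqq V-\tilde V \in \RR^{m\times n}$ and let $D_i \in \RR^n$ denote its $i$-th row. Since the distance in Definition \ref{def:logit-distance} is invariant under swapping its two arguments (a sign flip inside the per-row minimization over the scalar shift), the hypothesis $\Phi(\tilde V, V)\le \frac{\epsilon}{mn}$ unfolds to $\sum_{i\in[m]}\min_z \lVert D_i - z\,1_n\rVert_1 \le \epsilon$. For each $i$ let $z_i^*$ attain this minimum, collect them into $\bz^*=(z_1^*,\dots,z_m^*)$, and set $D'\coloneqq D - \bz^*\otimes 1_n$, so that $\sum_{i\in[m]}\lVert D'_i\rVert_1 \le \epsilon$ by construction.

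The key observation is that subtracting this row-wise shift leaves the target quantity untouched. Because $\by_1,\by_2\in\Delta_n$, we have $1_n^\top(\by_1-\by_2)=0$, so the $i$-th coordinate of $(\bz^*\otimes 1_n)(\by_1-\by_2)$ equals $z_i^*\cdot 1_n^\top(\by_1-\by_2)=0$; hence $(\bz^*\otimes 1_n)(\by_1-\by_2)$ is the zero vector and $\bx^\top D(\by_1-\by_2)=\bx^\top D'(\by_1-\by_2)$. This is the step that carries all the content: although $\lVert V-\tilde V\rVert$ may itself be large, only the shift-reduced matrix $D'$ — whose rows carry small total $\ell_1$ mass — influences the difference of utilities, which is precisely what the logit distance is engineered to capture via Observation \ref{prop:exp3-shift}.

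It then remains to bound $\abs{\bx^\top D'(\by_1-\by_2)}$ by routine estimates. Expanding over leader actions gives $\abs{\bx^\top D'(\by_1-\by_2)}\le \sum_{i\in[m]} x_i\,\abs{D_i'(\by_1-\by_2)}$, and for each $i$ H\"older's inequality yields $\abs{D_i'(\by_1-\by_2)}\le \lVert D_i'\rVert_1\,\lVert \by_1-\by_2\rVert_\infty \le \lVert D_i'\rVert_1$, using that the entries of $\by_1,\by_2$ lie in $[0,1]$ so that $\lVert \by_1-\by_2\rVert_\infty\le 1$. Since $\bx\in\Delta_m$ has each $x_i\le 1$, we conclude $\abs{\bx^\top D'(\by_1-\by_2)}\le \sum_{i\in[m]}\lVert D_i'\rVert_1 \le \epsilon \le 2\epsilon$, which is in fact stronger than the claimed bound; alternatively the cruder split $\abs{D_i'(\by_1-\by_2)}\le \lVert D_i'\rVert_\infty\,\lVert\by_1-\by_2\rVert_1 \le 2\lVert D_i'\rVert_1$ reproduces exactly the stated $2\epsilon$. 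I expect no genuine obstacle here: the sole conceptual ingredient is the shift-cancellation identity of the previous paragraph, after which the standard $\ell_1$–$\ell_\infty$ H\"older bound finishes the argument.
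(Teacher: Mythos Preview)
Your proof is correct and follows essentially the same approach as the paper: decompose $V-\tilde V$ into a row-wise shift plus a small-norm residual, observe that the shift annihilates against $\by_1-\by_2$ because $1_n^\top(\by_1-\by_2)=0$, and finish with a H\"older-type bound on the residual. Your direct estimate via $\lVert D_i'\rVert_1\,\lVert\by_1-\by_2\rVert_\infty$ in fact yields $\epsilon$ rather than the paper's $2\epsilon$ (which comes from first splitting $\abs{\bx^\top\Xi(\by_1-\by_2)}\le\abs{\bx^\top\Xi\by_1}+\abs{\bx^\top\Xi\by_2}$), so your version is slightly sharper.
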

\begin{proof}
Let $\tilde{V} = V + \bc\otimes 1_{n} + \Xi$ for constant vector $\bc \in \RR_m$ and $\norm{\Xi}_{1}\leq \epsilon$. By linearity, we decompose $\bx^\top (V-\tilde{V}) (\by_1 - \by_2)$ as $\bx^\top (\bc\otimes 1_{n}) (\by_1 - \by_2) + \bx^\top \Xi (\by_1 - \by_2) $.

We have already seen in Observation \ref{prop:exp3-shift}  that for any $\bx \in \Delta_m, \by_1,\by_2\in \Delta_n$, $\bx^\top (\bc\otimes 1_{n}) (\by_1 - \by_2) =  \sum_{j\in [n]}  \bx^\top \bc  (y_1^j - y_2^j)  =  \bx^\top \bc \sum_{j\in [n]}  (y_1^j - y_2^j) =  0$. 

So it only remains to argue that $\abs{\bx^\top \Xi (\by_1 - \by_2)} \leq 2\epsilon$. By triangle inequality, we have $\abs{\bx^\top \Xi (\by_1 - \by_2)} \leq \abs{\bx^\top \Xi \by_1 } + \abs{\bx^\top \Xi \by_2} $. Meanwhile, given $\norm{\bx}_1 = 1 $ and $\norm{\by}_1 = 1 $, by Holder's inequality, we have $\abs{\bx^\top \Xi \by} \leq \norm{\Xi}_{\infty} \leq \norm{\Xi}_{1} = \epsilon$. 
\end{proof}

\newpage

\section{Proofs of Theorem \ref{thm:realized-action}}
\label{append:learning-thm}

\begin{thm}%
It takes $\Theta(\frac{ m\log (mn/\delta) }{\rho \epsilon^2})$ queries of the follower's quantal response to recover the follower's utility $\tilde{V}$ within the logit distance $\Phi(V, \tilde{V}) = \frac{\epsilon}{\lambda} $ with probability at least $1-\delta$, where $\rho$ is the least non-zero measure in the induced follower's strategies.
\end{thm}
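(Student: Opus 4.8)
The plan is to stitch together the two supporting lemmas into a single end-to-end guarantee: Lemma \ref{lm:realized-action}, which converts $m$ multiplicatively-accurate estimates of the induced follower mixed strategies into a utility recovery of logit distance $O(\epsilon/\lambda)$, and Lemma \ref{lm:multiplicative-approx}, which controls the per-distribution sample cost. First I would fix the $m$ queried leader strategies to be a linearly independent basis—concretely the pure strategies $e_1,\dots,e_m$ used by \vertex{}—so that the strategy matrix $X$ in the recovery program \ref{al:solver} is the identity. This keeps the inversion step perfectly conditioned, ensuring that estimation error on each induced mixed strategy maps to error on $\tilde V$ without amplification.

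For the upper bound, I would estimate the induced quantal response independently at each of the $m$ rounds. By Lemma \ref{lm:multiplicative-approx}, collecting $\Theta\big(\frac{\log(n/\delta')}{\rho\epsilon^2}\big)$ samples at one leader strategy yields an empirical distribution that is a $(1-\epsilon)$-multiplicative approximation with probability at least $1-\delta'$. Setting the per-round failure probability to $\delta'=\delta/m$ and taking a union bound over the $m$ rounds makes all $m$ estimates simultaneously $(1-\epsilon)$-multiplicatively accurate with probability at least $1-\delta$, at a per-round cost of $\Theta\big(\frac{\log(mn/\delta)}{\rho\epsilon^2}\big)$ samples. Feeding these $m$ estimates into Lemma \ref{lm:realized-action} produces a $\tilde V$ with $\Phi(V,\tilde V)=O(\epsilon/\lambda)$, and summing the sample cost over the $m$ rounds gives the claimed $\Theta\big(\frac{m\log(mn/\delta)}{\rho\epsilon^2}\big)$ total.

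For the matching lower bound, I would argue that neither the factor of $m$ nor the per-distribution factor can be improved. The necessity of $m$ distinct (linearly independent) queries is inherited from the lower bound already underlying Proposition \ref{prop:mixed-strategy}: with fewer than $m$ strategies a direction of the column-difference parameters is left unconstrained. The per-query factor $\frac{\log(1/\delta)}{\rho\epsilon^2}$ is exactly the $\Theta$ lower bound asserted in Lemma \ref{lm:multiplicative-approx}, which is information-theoretically tight for learning a distribution to multiplicative precision $\epsilon$ on an atom of mass $\rho$. Since reconstructing $\tilde V$ to logit distance $\epsilon/\lambda$ forces a $(1-\Omega(\epsilon))$-multiplicative reconstruction of at least one induced distribution—again via the $\ln\frac{1}{1-\epsilon}=\Theta(\epsilon)$ correspondence—any algorithm must spend $\Omega\big(\frac{\log(1/\delta)}{\rho\epsilon^2}\big)$ samples on it, and these two obstructions combine to the stated bound.

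The main obstacle I anticipate is the error-propagation step: ensuring multiplicative error on the mixed strategies does not blow up when inverted through the log-sum-exp program into additive error on $\tilde V$. This is precisely where the well-conditioned (identity) strategy matrix matters and where Lemma \ref{lm:realized-action} does the heavy lifting—the two-sided estimate $\epsilon \le \ln\frac{1}{1-\epsilon} \le \frac{\epsilon}{1-\epsilon}$ keeps the transformed error within a constant factor of $\epsilon$, while $X^{-1}=I$ prevents amplification. Finally I would track the $\rho$ dependence, noting that for the logit distributions induced by bounded payoffs $\rho=\Theta(1/n)$, which recovers the $\Theta(1/n)$ scaling advertised in the introduction.
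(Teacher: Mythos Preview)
Your proposal is correct and follows essentially the same route as the paper: invoke Lemma~\ref{lm:multiplicative-approx} at each of the $m$ pure-strategy queries (with a union bound over the $m$ rounds via $\delta'=\delta/m$) to obtain simultaneous $(1-\epsilon)$-multiplicative estimates, then feed these into Lemma~\ref{lm:realized-action} to convert to logit distance $O(\epsilon/\lambda)$. You spell out the union bound and the lower-bound side more explicitly than the paper does, but the decomposition and the two supporting lemmas are identical.
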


\begin{proof}
We prove by combining the results of Lemma \ref{lm:realized} and Lemma \ref{lm:realized-approx}. Following from Lemma \ref{lm:realized-approx}, we can obtain $m$ queries of the $(1-\epsilon)$-multiplicative approximation of the follower's mixed strategies with $\Theta(\frac{ m\log (mn/\delta) }{\rho \epsilon^2})$ samples, with probability at least $1-\delta$. Using Lemma \ref{lm:realized}, these $m$ queries can recover the follower utility of distance $\tilde{V}$ within the logit distance $\Phi(V, \tilde{V}) = \frac{\epsilon}{\lambda} $. 
\end{proof}

\begin{lm}%
\label{lm:realized}
There exists a learning algorithm that can recover the follower's utility $\tilde{V}$ within the logit distance $\Phi(V, \tilde{V}) = O(\frac{\epsilon}{\lambda})$ from $m$ queries of the $(1-\epsilon)$-multiplicative approximation of the follower's mixed strategies.
\end{lm}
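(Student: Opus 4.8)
The plan is to reuse the recovery algorithm behind Proposition \ref{prop:mixed-strategy} --- namely the convex program \eqref{al:solver} --- but to feed it the $(1-\epsilon)$-multiplicative estimates $\hat\by(t)$ in place of the exact mixed strategies, and then to track how the estimation error propagates through the inversion. The crucial design choice is \emph{which} $m$ queries to issue: I would take the pure strategies $\bx(t) = e_t$ for $t\in[m]$, so that the stacked query matrix $X = [\bx(t)]_{t\in[m]}$ is exactly the identity $I$. These vectors are trivially linearly independent, so the full-rank hypothesis invoked in Proposition \ref{prop:mixed-strategy} still holds and the recovery step goes through unchanged.

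Next I would repeat the argument of Proposition \ref{prop:mixed-strategy} up to the point where the minimizers of the log-sum-exp objective are characterized as $z_j(t) = \ln \hat y_j(t) + c_t$ for arbitrary scalars $c_t$. Substituting into the constraint $\bz(t) = \lambda\,\bx(t)^\top \tilde V$ and stacking over $t$ yields the linear system $\lambda X^\top \tilde V = \ln \hat Y + \bc\otimes 1_n$, whose unique solution is $\tilde V = \lambda^{-1}(X^{-1})^\top[\ln \hat Y + \bc\otimes 1_n]$. Writing the ground truth as $V = \lambda^{-1}(X^{-1})^\top[\ln Y + \bc^*\otimes 1_n]$ for the appropriate shift $\bc^*$ (which exists by Observation \ref{prop:exp3-shift}), I subtract to obtain
\[
\tilde V - V = \lambda^{-1}(X^{-1})^\top\big[\ln\hat Y - \ln Y + (\bc-\bc^*)\otimes 1_n\big].
\]

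I would then specialize to $X = I$, which removes the inverse entirely, and split the residual into the row-shift term $(\bc-\bc^*)\otimes 1_n$ and the log-ratio term $\ln\hat Y - \ln Y$. The first term is precisely a row-wise shift, so by Observation \ref{prop:exp3-shift} it is annihilated by the inner $\min_z$ in the logit distance of Definition \ref{def:logit-distance}. For the second term, the $(1-\epsilon)$-multiplicative guarantee gives $\hat y_j(t)/y_j(t)\in[1-\epsilon,\,\tfrac{1}{1-\epsilon}]$ entrywise, hence $|\ln\hat y_j(t)-\ln y_j(t)| \le \ln\tfrac{1}{1-\epsilon} = O(\epsilon)$ for small $\epsilon$. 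Each entry of $\tilde V - V$ is therefore at most $O(\epsilon/\lambda)$ after the optimal per-row shift; summing the $mn$ entries and dividing by $mn$ as in Definition \ref{def:logit-distance} yields $\Phi(V,\tilde V) = O(\epsilon/\lambda)$.

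The one genuinely delicate step is the choice $X = I$. For a general full-rank $X$ the factor $(X^{-1})^\top$ multiplies the $O(\epsilon)$ per-entry log errors, and a poorly conditioned query set could amplify them into an arbitrarily large logit distance --- the identity above is exactly where an ill-conditioned $X$ would hurt. Using the standard basis forces $X^{-1}=I$, so no amplification occurs and the per-entry $O(\epsilon/\lambda)$ bound survives intact; this is precisely the structural property that makes the pure-strategy scheme provably robust. The remaining manipulations (the estimate $\ln\tfrac{1}{1-\epsilon}=O(\epsilon)$ and the entrywise summation into $\Phi$) are routine.
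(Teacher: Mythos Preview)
Your proposal is correct and follows essentially the same approach as the paper's own proof: both feed the approximate strategies into the convex program \eqref{al:solver}, characterize the minimizers as $z_j(t)=\ln \hat y_j(t)+c_t$, invert the resulting linear system, decompose $\tilde V - V$ into a row-shift term (killed by the logit distance) plus a $\lambda^{-1}(X^{-1})^\top$ times the entrywise log-ratio error, and then crucially take $X=I$ so that the inverse does not amplify the $O(\epsilon)$ per-entry error. Your explicit discussion of why the pure-strategy choice $X=I$ is the key design decision is exactly the point the paper makes in Equation~\eqref{eq:est-error}.
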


\begin{proof}
Pick $m$ linearly independent basis vectors for each $\bx(t)$ in $m$ rounds. 
We can observe the data $\{ (\bx(t), \tilde{\by}(t))\}_{t\in [m]}$, where $\epsilon$-multiplicative approximation guarantee ensures that the observed distribution $\frac{\tilde{y}_j(t)}{\hat{y}_j(t)} \in [1-\epsilon, \frac{1}{1-\epsilon}], \forall j\in [n], t\in[T]$ w.r.t. the ground-truth strategy $\hat{\by}(t)$. 

To recover the follower's utility, we formulate an optimization program by minimizing the cross entropy loss $L(P;Q) = - P \log Q$, where $P$ is the observed strategy $\tilde{\by}_t$ and $Q$ is the predicted strategy with each entry $p_j(t) = \frac{\exp(\lambda \bx(t)^\top V_j)}{\sum_{k\in [n]}\exp(\lambda \bx(t)^\top V_k)}$, 
\begin{lp*}%
\mini{ \sum_{t\in [m]} \left[ \log \sum_{j\in [n]} \exp z_j(t) - \tilde{\by}(t)\cdot \bz(t) \right] }
\qcon{ \bz(t) = \lambda \bx(t)^\top \tilde{V} }{  t \in [m]} 
\end{lp*}

We now argue that the above optimization program can recover the follower's utility matrix $\tilde{V}$ such that $\Phi(V, \tilde{V}) = O(\frac{\alpha}{\lambda})$.

Here in this program $\tilde{V}$ is the only unknown variable, and $\bz(t)$ serves as a proxy variable of $\tilde{V}$. And we start by determining $\bz(t)$. Observe that the objective of the optimization program is a log-sum-exp function w.r.t. variables $\{\bz(t) \}_{t\in [m]} $, which is convex. We can compute its derivative is zero at $\{  z_i(t) = \ln \tilde{y}_i(t) + c_t | \forall i\in [n], \forall c_t \in \RR, \forall t \in [m] \}$, which forms the set of minimizers of this function.

Now we fix any $c_t\in \RR$ for each $t\in [m]$, and denote the vector $\bc \coloneqq [c_t ]_{t\in [m]}$.
Let $X \coloneqq [\bx(t) ]_{t\in [m]}$, $\tilde{Y} \coloneqq [\tilde{\by}(t) ]_{t\in [m]}$.
Then, replacing each $z_i(t)$ by $\ln \hat{y_i}(t) + c_t$ in the optimization constraint, we can formulate the linear equation $ \lambda X^T \tilde{V} = \ln \tilde{Y} + \bc \otimes 1_{n} $, denoted as $\cL(c)$. Since $X^T$ is a full rank matrix in $\RR^{m\times m}$, $\cL(\bc)$ has a unique solution for $\tilde{V} = \lambda^{-1} (X^{-1})^\top [\ln \tilde{Y}  + \bc \otimes 1_{n} ]$.

Let $c^*$ be the vector such that the unique solution to $\cL(c^*, \tilde{Y})$ is the ground-truth follower's utility $V$. Let $\tilde{Y} = \hat{Y}\circ \beta$, where each entry in $\beta$ is in $[1-\epsilon, \frac{1}{1-\epsilon}]$.
By construction, such $c^*$ must exist. Now for any $c, \tilde{Y}$, we can derive that
\begin{align*}
\tilde{V} & = \lambda^{-1} (X^{-1})^\top [\ln \tilde{Y}  + c \otimes 1_{n} ] \\
& = \lambda^{-1} (X^{-1})^\top [\ln \tilde{Y}  + c^* \otimes I_{n} ] +  \lambda^{-1} (X^{-1})^\top [c^*-c]\otimes 1_{n} \\
& = \lambda^{-1} (X^{-1})^\top [\ln Y + \ln \beta  + c^* \otimes 1_{n} ] +  \lambda^{-1} (X^{-1})^\top [c^*-c]\otimes 1_{n} \\
& = V + \lambda^{-1} (X^{-1})^\top \ln \beta + \lambda^{-1} (X^{-1})^\top [c^*-c] \otimes 1_{n}
\end{align*}
where $\lambda^{-1} (X^{-1})^\top [c^*-c]$ forms a vector in $\RR^m$. Hence, by definition, we can normalize out the  $\lambda^{-1} (X^{-1})^\top [c^*-c] \otimes 1_{n} $ in the logit distance, and thus, 
\begin{equation}\label{eq:est-error}
 \Phi(V, \tilde{V}) \leq \frac{1}{mn}\norm{ \lambda^{-1} (X^{-1})^\top \ln \beta }_1 = O(\frac{\epsilon}{\lambda}),   
\end{equation}
where we use the approximation that for small positive $\epsilon$ close to zero, we have $\ln(\frac{1}{1-\epsilon}) = O(\epsilon)$, and we pick $X$ to be an identity matrix such that $\norm{X^{-1} \ln \beta}_1 \leq mn\epsilon$.

\end{proof}

\begin{lm}%
\label{lm:realized-approx}
For any discrete distribution $\by$ with support size $n$ and the least non-zero measure $\min_{i\in [n], y_i>0} \{ y_i \} \geq \rho$, with $\Theta(\frac{\log (n/\delta) }{ \rho \epsilon^2})$ samples, the corresponding empirical distribution $\hat{\by}$ is an $(1-\epsilon)$-multiplicative approximation to $\by$, with probability at least $1-\delta$.
\end{lm}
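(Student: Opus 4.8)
The plan is to reduce the statement to a coordinate-wise concentration argument and then stitch the coordinates together with a union bound. Fix the distribution $\by$ and let $N$ denote the number of i.i.d.\ samples. For each coordinate $i$ in the support of $\by$ (i.e.\ those with $y_i \geq \rho > 0$), the empirical frequency $\hat{y}_i = \frac{1}{N}\sum_{k=1}^{N} \mathbf{1}[\text{$k$-th sample} = i]$ is an average of $N$ independent $\text{Bernoulli}(y_i)$ indicators, so $\Ex[\hat{y}_i] = y_i$. The target notion of $(1-\epsilon)$-multiplicative approximation amounts to requiring $\frac{\hat{y}_i}{y_i} \in [1-\epsilon, \frac{1}{1-\epsilon}]$ for every support coordinate simultaneously, which is implied by the two-sided relative deviation bound $|\hat{y}_i - y_i| \leq \epsilon\, y_i$ since $1+\epsilon \leq \frac{1}{1-\epsilon}$ on $[0,1)$.

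First I would apply the multiplicative Chernoff bound to the count $N\hat{y}_i$, whose mean is $\mu_i \coloneqq N y_i$, obtaining $\Pr[|\hat{y}_i - y_i| \geq \epsilon\, y_i] \leq 2\exp(-\epsilon^2 \mu_i / 3)$. The key step is to lower bound $\mu_i$ uniformly across coordinates: since every support coordinate satisfies $y_i \geq \rho$, we get $\mu_i \geq N\rho$, hence each coordinate fails with probability at most $2\exp(-\epsilon^2 N\rho/3)$. This is precisely where the hypothesis on the least non-zero measure $\rho$ enters---it identifies the smallest mean, and therefore the hardest coordinate to control, so it governs the required sample size.

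Next I would take a union bound over the at most $n$ support coordinates (coordinates with $y_i = 0$ are never sampled and are trivially exact), giving overall failure probability at most $2n\exp(-\epsilon^2 N\rho / 3)$. Setting this $\leq \delta$ and solving for $N$ yields $N \geq \frac{3\ln(2n/\delta)}{\epsilon^2 \rho} = \Theta\!\left(\frac{\log(n/\delta)}{\rho\epsilon^2}\right)$, which establishes the sufficiency direction.

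For the matching lower bound implicit in the $\Theta$, I would isolate the hardest coordinate $i^\ast$ with $y_{i^\ast} = \rho$ and invoke a reverse (anti-concentration) Chernoff estimate: distinguishing $y_{i^\ast}$ from a value off by a relative factor $\epsilon$ already demands $\Omega\!\left(\frac{\log(1/\delta)}{\rho\epsilon^2}\right)$ samples at confidence $1-\delta$, and a construction placing $\Theta(n)$ coordinates near this boundary forces an effective per-coordinate confidence of $\delta/n$, which recovers the $\log n$ factor. The hard part will be exactly this lower-bound direction: the upper bound is a routine Chernoff-plus-union argument, whereas obtaining a tight $\Omega\!\left(\frac{\log(n/\delta)}{\rho\epsilon^2}\right)$---in particular the correct $\log n$ dependence---requires a careful anti-concentration argument on a worst-case instance rather than a generic tail bound.
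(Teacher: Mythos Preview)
Your upper-bound argument is essentially identical to the paper's: coordinate-wise multiplicative Chernoff, worst coordinate governed by $\rho$, then a union bound over the at most $n$ support points.

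For the lower bound the two approaches diverge. The paper does not argue via anti-concentration on a single coordinate plus an ``effective per-coordinate confidence'' heuristic. Instead it builds an explicit family of $n-1$ distributions $\{\by^i\}_{i\in[n-1]}$, each with least measure exactly $\rho$, where $\by^i$ puts mass $\rho+3\xi$ on coordinate $i$, mass $\rho$ on the other $n-2$ small coordinates, and the remainder on coordinate $n$, with $\xi=\epsilon\rho$. It then \emph{reduces} $(1-\epsilon)$-multiplicative estimation to the testing problem of identifying which $\by^{i^\ast}$ generated the data (since a good multiplicative estimate has TV distance at most $\xi$ to $\by^{i^\ast}$, while the candidates are $3\xi$ apart), and finally invokes Assouad's lemma together with the Hellinger calculation $d_H(\by^i,\by^j)=\Theta(\rho\epsilon^2)$ to obtain $\Omega(\log(n/\delta)/(\rho\epsilon^2))$.

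Your sketch has the right intuition but a genuine gap at the step ``placing $\Theta(n)$ coordinates near this boundary forces an effective per-coordinate confidence of $\delta/n$.'' Because all coordinates are estimated from the \emph{same} sample, the failure events are dependent and you cannot simply invert the union bound; without a testing reduction (Assouad or Fano) or a direct coupling argument, this step does not go through. So the piece you flagged as hard is exactly the piece the paper supplies, and your proposal as written would still need that reduction to be complete.
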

\begin{proof}
We start with the sample complexity upper bound: 
Given $T = O(\frac{\log n + \log1/\delta}{ \rho \epsilon^2})$ number of i.i.d. samples $\{ y(t) \in [n] \}_{t\in [T]}$ from distribution $\by$, we use the standard mean estimator to construct the empirical distribution $\hat{\by}$ with each entry $\hat{y}_i = \frac{\sum_{t\in T} \one [y(t) = i] }{T}$.
By definition, if $\forall i\in [n], \frac{\hat{y}_i}{y_i} \in [1-\epsilon, \frac{1}{1-\epsilon}]$, then $\hat{\by}$ is a $(1-\epsilon)$-multiplicative approximation of $\by$. 

We know for any $y_i = 0$, the empirical estimation of $\hat{y_i}$ must be perfect. Otherwise, for all $y_i > 0$, we can use the Chernoff multiplicative bound \citep{chernoff1952measure}, as $E[\hat{y}_i] = y_i$ taking expectation over randomness of the samples. That is, with probability at least $1-\rho/n$, with $O(\frac{\log n + \log1/\delta}{ \rho \epsilon^2})$ number of samples, we get $ \frac{\hat{y}_i}{y_i} \in  [1-\epsilon, 1+\epsilon] \subset [1-\epsilon, \frac{1}{1-\epsilon}]$. Therefore, by union bound, $\hat{\by}$ is an $(1-\epsilon)$-multiplicative approximation to $\by$, with probability at least $1-\delta$.

We now show the sample complexity lower bound: there exists some distribution $\by$ with support size $n$ and the least non-zero measure $\rho$ that requires at least $\Omega(\frac{\log (n/\delta) }{\rho\epsilon^2 } )$ to learn an $(1-\epsilon)$-multiplicative approximation of $\by$. 
We prove by constructing $n-1$ probability distributions that are hard to distinguish and reducing the estimation error into such a testing problem. In the lower bound instance, we let $n\geq3$ and $\rho = o(1/n)$. Specifically, consider the following $n-1$ distributions, where for each $i\in [n-1]$, we let the $j$th entry of distribution $\by^i$ be $y^i_j = \begin{cases}\rho + 3\xi,\ i = j\\ \rho ,\ i\neq j \\  1- (n-1)\rho - 3\xi,\ j=n \end{cases} $ with $\xi < \rho < \frac{1-3\xi}{n}$. By such construction, each  $\by^i$ have support size $n$ and the least non-zero measure $\rho$, and the TV distance between any two of these distributions, $d_{TV}(\by^{i}, \by^{j}) = 3\xi$.

If we let $\xi = \epsilon \rho$, we can reduce the learning problem of $(1-\epsilon)$-multiplicative approximation of $\by$ to the test problem of distinguishing the $n-1$ probability distributions. That is, pick any $\by^{i^*}$, if we have enough samples to learn $(1-\epsilon)$-multiplicative approximation of $\by^{i^*}$, we obtain an empirical estimation $\hat{\by}$ that has TV distance $d_{TV}(\hat{\by}, \by^{i^*}) \leq \epsilon\rho = \xi$. With such $\hat{\by}$, we can tell apart $\by^{i^*}$ from $\by^{i}$ according to the triangle inequality that $\forall i\neq i^*, d_{TV}(\hat{\by}, \by^{j}) \geq d_{TV}(\by^{i^*}, \by^{j}) - d_{TV}(\hat{\by}, \by^{i^*}) \geq 2\xi $.

So we now determine the lower bound of the testing problem using the Assouad's Lemma \citep{assouad1983deux}: it takes $\Omega(\frac{\log (n/\delta) }{d_H(\by^i, \by^j)} )$ samples to distinguish any two distribution $\by^i, \by^j$ with probability at least $1-\delta/n$.  In this case, the squared Hellinger distance of $d_H(\by^i, \by^j)$ can be computed as $\xi < \rho $,
\begin{align*}
  d_H(\by^i, \by^j)  =  \Theta( (\sqrt{\rho} - \sqrt{\rho + 3\xi} )^2 ) 
  = \Theta (  \rho  ( 1 - (1 - \frac{\xi}{\rho}) )^2  )
  =  \Theta ( \frac{ \xi^2 }{\rho} )    
  = \Theta ( \rho\epsilon^2   ) 
\end{align*}
By union bound, using at least $\Omega( \rho\epsilon^2  ) $ samples, we can distinguish any two distributions $\by^i, \by^j$ with probability at least $1-\delta$. Then, the reduction implies it requires at least $\Omega(\frac{\log (n/\delta) }{\rho\epsilon^2 } )$ to learn an $(1-\epsilon)$-multiplicative approximation of $\by$.
\end{proof}

\newpage

\section{Additional Experiment Details and Results}
\label{append:exp}

In this section, we provide the detailed experiment descriptions and some additional empirical results to further understand the learning performance of the \vertex{} framework. For the ease of reproduction, we also include the implementation details in the our released code in supplementary materials. 

\begin{itemize}[leftmargin=*]
    \item \textbf{The number of leader and follower actions $m, n$:} In compliment to the experiment on the change of logit distance of recovered utility w.r.t. the number of leader and follower actions $m, n$, we also investigate the change of $1/\rho$ in the randomly generated instances here. This helps us to further understand the effect of game size on the learning performance through $\rho$, since our Theorem \ref{thm:shift} predicts a linear relation between $1/\rho$ and the logit distance. In the left plot of Figure \ref{fig:control-exp-appendix}, it shows that in expectation $1/\rho$ scales almost linearly with $n$. This justifies the almost linear relation between the logit distance and $n$ in the left plot of Figure \ref{fig:control-exp}. In addition, $1/\rho$ is almost independent of $m$, while its variance decreases as $m$ increases --- it becomes less likely to have extremely small non-zeros measures in the follower's QR strategies. 
    
    \item \textbf{The rank of follower utility:} We also investigate the influence from the rank of the follower utility on the learning performance. We randomly generate matrices of follower utility with $m=n=20$ that respectively contains $\{2, 4, 8, 16, 20\}$ linearly independent rows. In the middle plot of Figure \ref{fig:control-exp-appendix}, it suggests that the rank of the follower utility has very little effect on the performance of \vertex{}. This result does match with our expectation, as the structure insight on rank of follower utility is not incorporated into the learning framework. We also anticipate a performance boost if the model were to utilize the prior  knowledge on linear independence of certain rows in the utility matrix. However, this appears to be a rather unrealistic assumption, and the rank constrained optimization in general is known to be an NP-hard problem.
    
    \item \textbf{Active over offline learning:} In the left plot of Figure \ref{fig:control-exp-appendix}, we showcase more results comparing the learning performance between our framework \vertex{} and offline learning from randomly generated data points. Specifically, the offline learning data is generated from $K \in \{10^2, 10^3, 10^4, 10^5, 10^6\}$ number of strategies with $T = 10^6$ samples in total. We estimate the empirical distributions induced by these $K$ strategies, using $T/K \in \{10^4, 10^3, 10^2, 10, 1\}$ samples of each. So the more strategies, the less accurate estimation we have for each of the corresponding follower QR strategy. In the case when $K \in  \{10^2, 10^3, 10^4\}$, we directly solve for the $\tilde{V}$ through the optimization program \ref{al:solver}. Due to the computation and memory bottleneck of the optimization solver as the number of terms in the optimization objective grows, we have to resort to gradient descent in the case when $K \in \{10^5, 10^6\}$, where the gradient is determined at each iteration using $\partial \ell_t / \partial \tilde{V}_t$ and $\ell_t$ is the cross entropy loss computed between the empirical (from data) and the predicted (from $\tilde{V}_{t}$) follower QR strategy w.r.t. the $t$th sample of follower response. This gives us an iterative form of the optimization program \ref{al:solver}, but the gradient descent method does not guarantee the convergence to optimality as the optimization solver does. We plot the results using the Adam optimizer with learning rate $10^{-3}$ in the last two columns. In contrast, we plot the error of \vertex{} that only use the $K=10$ different strategies in the first column. We can see that the \vertex{} significantly outperforms these offline learning setups, especially when $\lambda$ is smaller such that the response of follower tends to be more irrational and thus ``noisy''. And the gradient descent method gives much worse optimization results than the optimizer solver. This showcases that our learning framework is indeed able to leverage the noise in follower responses as valuable feedback for payoff recovery.  

\end{itemize}

\begin{figure}[h]
    \centering
    \includegraphics[width=0.32\linewidth]{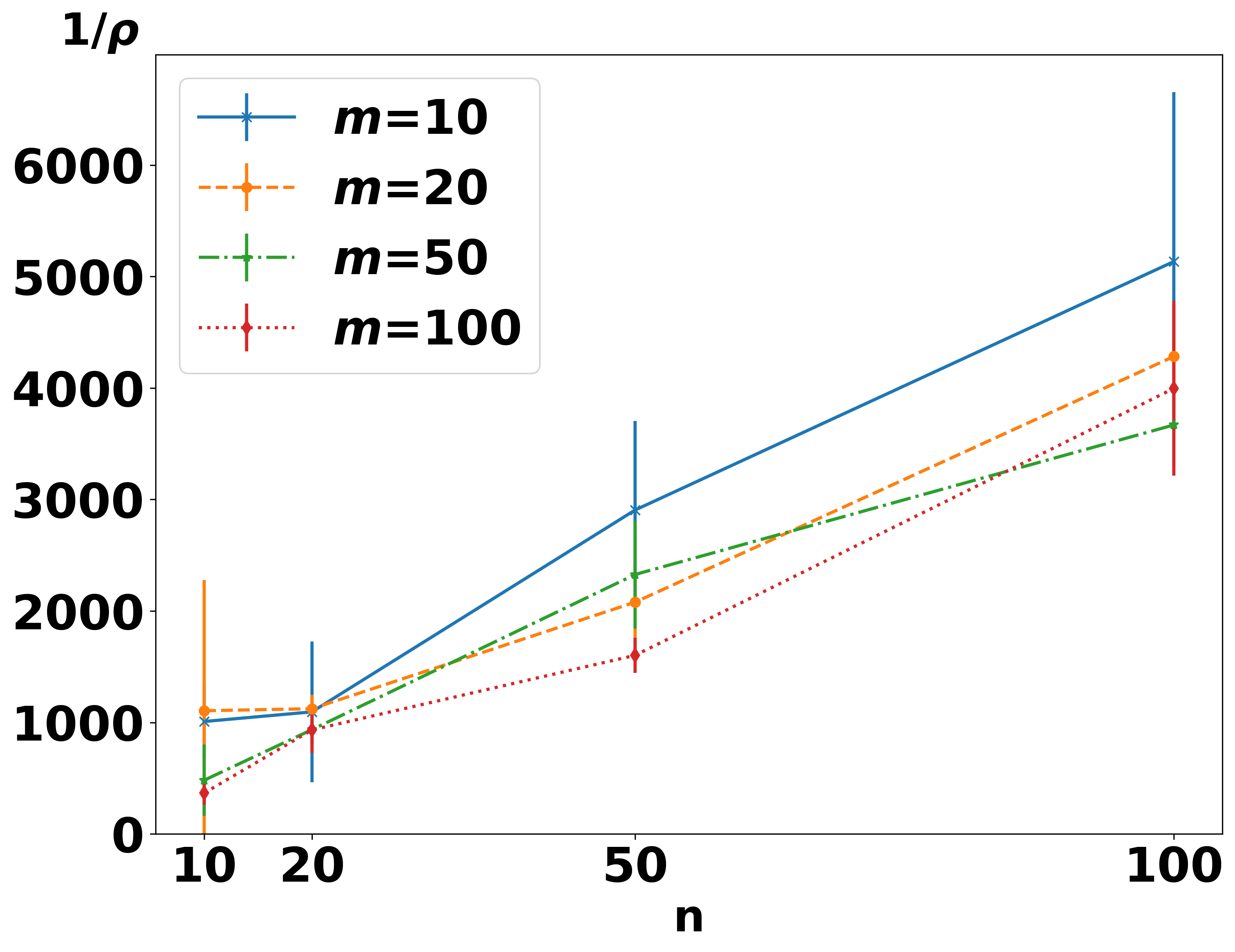}
    \includegraphics[width=0.32\linewidth]{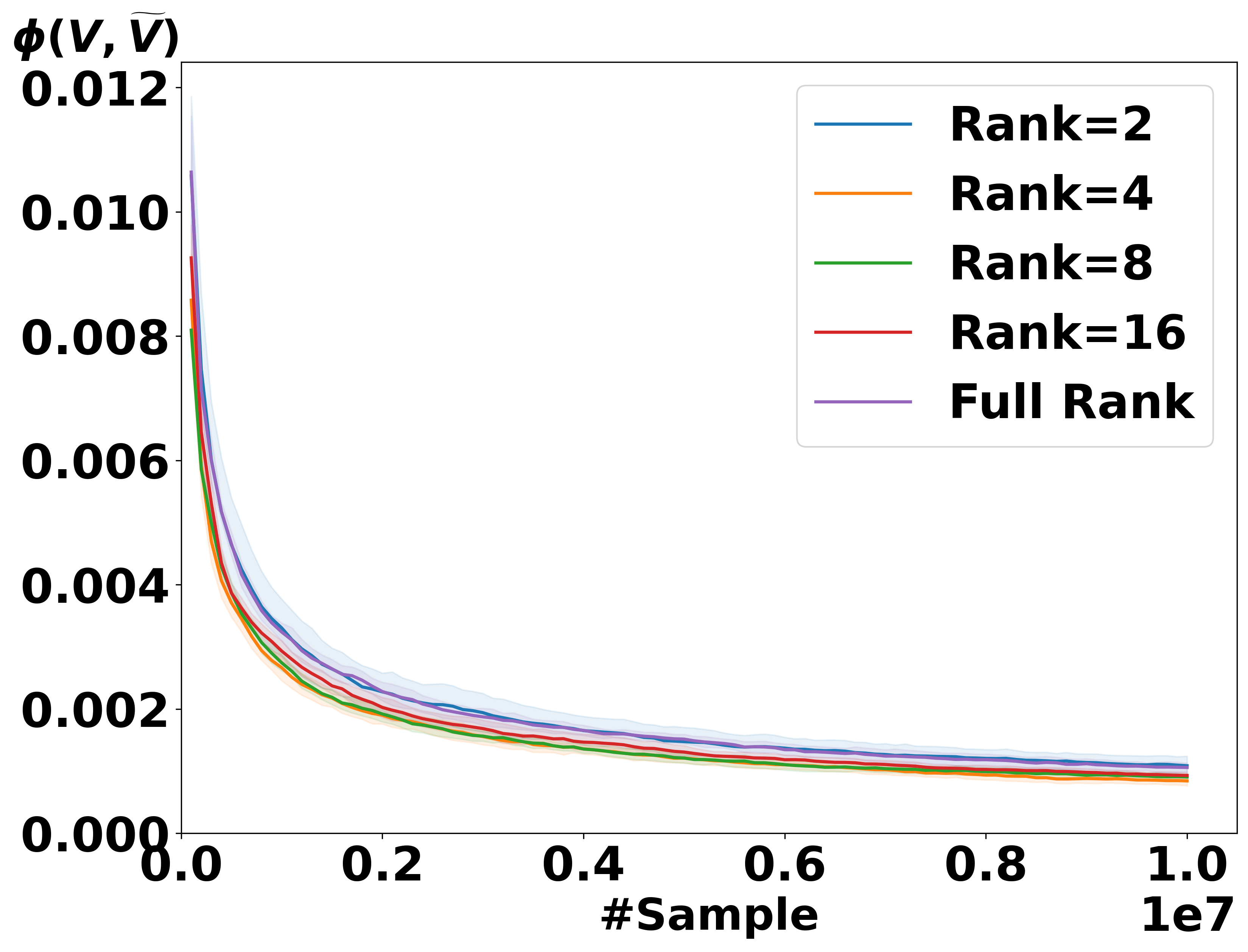}
    \includegraphics[width=0.32\linewidth]{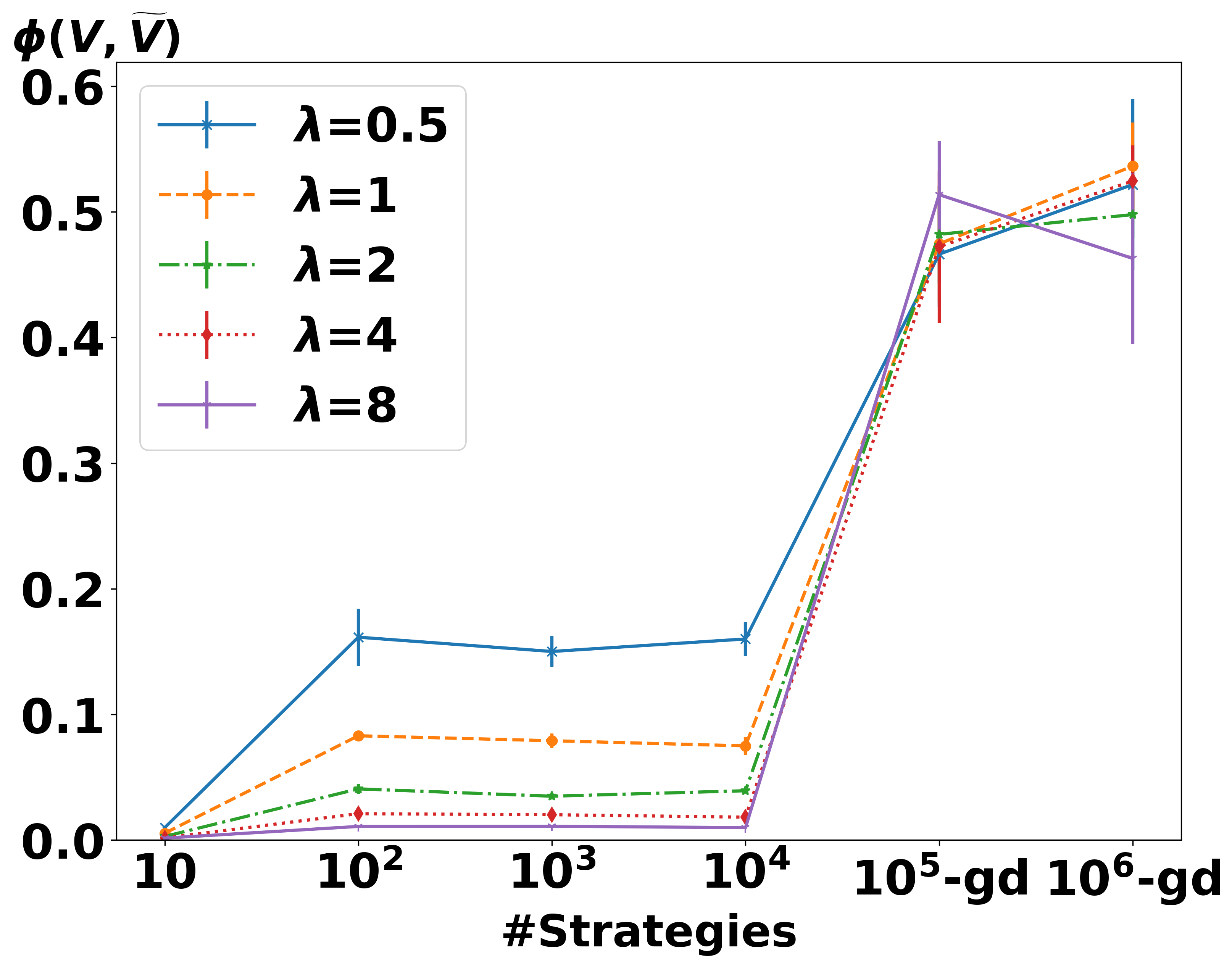}
    \caption{Recovering the Stackelberg game of different size $m\times n$ (left), rank (middle) and using data generated from different number of strategies (right).}
    \label{fig:control-exp-appendix}
\end{figure}

\end{document}